\documentclass[11pt]{article}%
\usepackage{amsfonts}
\usepackage{amsmath}
\usepackage{amssymb}
\usepackage{graphicx}%
\providecommand{\U}[1]{\protect\rule{.1in}{.1in}}
\newtheorem{theorem}{Theorem}
\newtheorem{acknowledgement}[theorem]{Acknowledgement}

\newtheorem{corollary}[theorem]{Corollary}

\newtheorem{definition}[theorem]{Definition}

\newtheorem{lemma}[theorem]{Lemma}

\newtheorem{proposition}[theorem]{Proposition}

\newenvironment{proof}[1][Proof]{\noindent\textbf{#1.} }{\ \rule{0.5em}{0.5em}}
\begin{document}

\title{On Quantum Indeterminacy and the Uncertainty Principle}
\author{Maurice A. de Gosson\thanks{maurice.de.gosson@univie.ac.at}\\University of Vienna\\Faculty of Mathematics, NuHAG\\Vienna \ AUSTRIA,}
\maketitle

\begin{abstract}
We propose a geometric formulation of quantum indeterminacy based on $\hbar
$-polar duality between convex bodies representing position and momentum data.
A pair $(X,P)$ of centrally symmetric convex bodies is said to satisfy the
indeterminacy condition when $X^{\hbar}\subset P$ (equivalently $P^{\hbar
}\subset X$). We show that this condition naturally arises from the geometry
of quantum blobs and is closely related to Hardy's uncertainty principle and
the Donoho--Stark inequalities. Using symplectic capacities and John
ellipsoids, we associate a canonical covariance ellipsoid with every quantum
polar pair and prove that it satisfies the quantum condition. The
Robertson--Schr\"{o}dinger inequalities follow as a consequence. This provides
a non-statistical formulation of quantum indeterminacy from which the usual
uncertainty relations emerge.

\end{abstract}
\date{}

\bigskip

\noindent\textbf{AMS subject classification}: 53D22, 53D05, 81S10, 81S30, 82C10

\noindent\textbf{PACS}: 03.65.Ta, 05.30.Ch, 03.65.Sq, 02.30.Ik, 02.20.Qs

\noindent\textbf{Keywords}. uncertainty principle, quantum blobs, polar
duality, symplectic capacities

\section{Introduction and Results}

Quantum indeterminacy is an ontological concept: it concerns what exists and
how physical reality is constituted at the quantum level. By contrast, the
uncertainty principle addresses the limits of simultaneous sharp description
of physical quantities (in our case, position and momentum).

Historically, quantum indeterminacy was first formulated by Heisenberg
\cite{Heisenberg} in 1927 in the form $\Delta x\,\Delta p\simeq h$ where
$\Delta x$ and $\,\Delta p$ were viewed as measurement errors; a few years
later Kennard \cite{Kennard} formalized Heisenberg's postulate by proving that
$\Delta x\,\Delta p\geq\hbar/2$ where $\Delta x$ and $\,\Delta p$ were this
time rigorously defined as standard deviations. Robertson and Schr\"{o}dinger
\cite{Robertson} later proved the sharper inequalities%
\begin{equation}
(\Delta x_{j})^{2}(\Delta p_{j})^{2}\geq\Delta(x_{j},p_{j})^{2}+\tfrac{1}%
{4}\hbar^{2}\text{ \ },\text{ \ }1\leq j\leq n \label{RS}%
\end{equation}
which involve both variances and covariances. While these formulations are
mathematically sound and physically successful, they rely on statistical
descriptors whose interpretation as measures of spread is known to be fully
satisfactory essentially only for Gaussian states; this was pointed out and
discussed by Hilgevoord and Uffink \cite{hi02,hiuf85bis,Stanford}. This raises
a natural question: to what extent is the uncertainty principle intrinsically
statistical, and to what extent does it reflect a deeper geometric or
structural property of phase space?

The purpose of the present work is to propose a formulation of quantum
indeterminacy that does not rely on statistical quantities such as means,
variances, or covariances. Instead, we introduce a geometric and topological
framework based on phase-space sets, polar duality, and symplectic invariants.
Our approach is motivated by earlier work \cite{gopolar,FOOP1,point} and by
geometric constraints on phase-space distributions. We consider empirical
position and momentum data, obtained either in separate experimental runs or
through approximate joint measurements. These clouds are ideally represented
by sets $X\subset\mathbb{R}_{x}^{n}$ and $P\subset\mathbb{R}_{p}^{n}$, which
we assume to be convex and centrally symmetric bodies. We propose the
following definition of quantum indeterminacy, which is a mathematical
hypothesis but physically a conjecture which can (in principle) be
experientially tested: \bigskip

\noindent\textbf{Indeterminacy condition.} \textit{Let }$X\subset
\mathbb{R}_{x}^{n}$\textit{ and }$P\subset\mathbb{R}_{p}^{n}$\textit{$_{p}%
^{n}$ be centrally symmetric convex bodies, centered at the origin. We say
that $($}$X,P)$\textit{ satisfies the quantum indeterminacy condition if}
\[
X^{\hbar}\subset P\qquad\mathit{or,equivalently,}\qquad P^{\hbar}\subset X,
\]
\textit{where }$X^{\hbar}$\textit{ is the $\hbar$-polar dual of }%
$X$\textit{,}
\[
X^{\hbar}=\{p\in\mathbb{R}^{n}:\sup_{x\in X}p\cdot x\leq\hbar\}.
\]
\textit{We call }$(X,P)$\textit{ a quantum polar pair.}\medskip

This condition is dimensionally natural, since the pairing $p\cdot x$ appears
in the dimensionless combination $p\cdot x/\hbar$, reflecting the quantum of
action. It provides a non-statistical constraint relating position and
momentum data, expressed purely in geometric terms. It is not, as it might
seem at first glance, pulled out of thin air but motivated by the following
observation (see Section \ref{sec2}): let $\mathrm{QB}(S)=S(B^{2n}(\sqrt
{\hbar}))$ ($S\in\operatorname*{Sp}(n)$, the symplectic group) be a "quantum
blob", \textit{i.e.} a minimum uncertainty cell in phase space. It turns out
that the orthogonal projections $X$ and $P$ of $\mathrm{QB}(S)$ on
$\mathbb{R}_{x}^{n}$ and $\mathbb{R}_{p}^{n}$, respectively, precisely form a
quantum polar pair. As will be seen in Section \ref{sec1} there are other
natural occurrences of quantum polarity, as in Hardy's uncertainty principle
and the Donoho--Stark inequalities.

The assumption that the bodies are centered at the origin is a normalization,
not a physical restriction. If the observed clouds are centered at $x_{0}$ and
$p_{0}$, respectively, one first translates them to $X-x_{0}$ and $P-p_{0}$.
Phase-space translations (and, in particular, momentum boosts) change the
center but not the intrinsic shape of the data or the symplectic uncertainty
constraint. Central symmetry is a genuine hypothesis of the present
formulation; it is precisely the setting in which ordinary polar duality gives
the clean inclusion above. Extensions to non-centrally-symmetric bodies would
require a choice of center or a more general polarity and are not pursued here.

The dimension $n$ should be understood as the number of configuration-space
degrees of freedom. Thus the formalism applies equally to one particle in $n$
spatial dimensions and, after collecting all configuration coordinates, to a
many-particle system with $n$ total degrees of freedom. No tensor-product or
independence assumption is made by the geometric condition itself.

This work is organized as follows:

\begin{itemize}
\item In Section~\ref{sec1}, we review the notion of $\hbar$-polar duality
introduced in our previous work \cite{FOOP1,point,arxiv1,arxiv2}, and
illustrate its usefulness by means of two examples: Hardy's uncertainty
inequalities and the Donoho--Stark concentration principle.

\item In Section~\ref{sec2}, we explain how the notion of quantum blobs,
introduced and studied in \cite{blobs,Bullsci,goluPR}, is related to polar
duality. Quantum blobs are linear symplectic images of phase-space balls of
radius $\sqrt{\hbar}$ and represent minimum-uncertainty phase-space units in
the sense of the Robertson--Schr\"odinger inequalities. We establish the
fundamental correspondence between quantum blobs and generalized coherent
states (Gaussian states).

\item Section~\ref{sec3} is the most technical part of the paper. We show that
the orthogonal projections of a quantum blob onto the $x$- and $p$-spaces form
a quantum polar pair, thereby establishing a direct link between quantum
polarity and the uncertainty principle. This result also reveals a close
analogy between Pauli's reconstruction problem for Gaussian states and the
reconstruction of quantum blobs from their orthogonal projections.

\item In Section~\ref{sec4}, we prove that the John ellipsoid of $X\times P$
contains a quantum blob whenever $P^{\hbar}\subset X$, and is itself a quantum
blob when $P^{\hbar}=X$. This allows us to associate the John ellipsoid of
$X\times P$ with a quantum covariance matrix; the identification is
particularly transparent in the case $P^{\hbar}=X$.

\item In Section~\ref{sec5}, we show that the quantum polarity condition
implies the uncertainty principle in its familiar Robertson--Schr\"odinger
form. We then relate this condition to a symplectic-capacity formulation and
to the usual matrix quantum condition. The notion of symplectic capacity,
which is closely related to Gromov's symplectic non-squeezing theorem, is
briefly reviewed in Appendix~A.

\item Finally, in Section~\ref{secConclusion}, we discuss the scope and
interpretation of our constructions, possible extensions of the present
framework, and technical limitations that remain to be addressed in future work.
\end{itemize}

\paragraph{Notation and terminology}

We denote by $\mathbb{R}_{z}^{2n}\equiv\mathbb{R}_{x}^{n}\times\mathbb{R}%
_{p}^{n}$ the phase space equipped with the standard symplectic form%
\begin{equation}
\sigma=dp\wedge dx=dp_{1}\wedge dx_{1}+\cdot\cdot\cdot+dp_{n}\wedge dx_{n}.
\label{diffnot}%
\end{equation}
In matrix notation $\sigma(z,z^{\prime})=Jz\cdot z^{\prime}=(z^{\prime}%
)^{T}Jz$ where $J=%
\begin{pmatrix}
0_{n\times n} & I_{n\times n}\\
-I_{n\times n} & 0_{n\times n}%
\end{pmatrix}
$ is the standard symplectic matrix.

\section{$\hbar$-Polar Duality: First Steps\label{sec1}}

Let us briefly review the notion of $\hbar$-polar duality we introduced and
related to quantum mechanics in \cite{FOOP1,point,arxiv1,arxiv2}; also see the
recent paper \cite{arxiv3} by Berezovik and Karasev.

\subsection{$\hbar$-polar duality}

Let $X$ be a centrally symmetric convex body in $\mathbb{R}_{x}^{n}$. By
definition the $\hbar$-\emph{polar dual} of $X$ is the subset
\begin{equation}
X^{\hbar}=\{p\in\mathbb{R}_{p}^{n}:px\leq\hbar\text{ \textit{for all} }x\in
X\} \label{omo}%
\end{equation}
of the dual space $\mathbb{R}_{p}^{n}\equiv(\mathbb{R}_{x}^{n})^{\ast}$. In
the mathematical literature one usually chooses $\hbar=1$, in which case one
writes $X^{o}$ for the polar dual; we have $X^{\hbar}=\hbar X^{o}$.

We will from now on abbreviate the terminology by talking about "polarity"
instead of "$\hbar$-polarity" .

The following properties of polarity are obvious:
\begin{align}
\text{\textit{Biduality}}  &  \text{: }(X^{\hbar})^{\hbar}=X~;\\
\text{\textit{Antimonotone}}  &  \text{\textit{: }}X\subset Y\Longrightarrow
Y^{\hbar}\subset X^{\hbar}\\
\text{\textit{Scaling}}  &  \text{: }\det L\neq0\Longrightarrow(LX)^{\hbar
}=(L^{T})^{-1}X^{\hbar}~.
\end{align}
\noindent

\begin{lemma}
Let $B_{X}^{n}(R)$ (\textit{resp}. $B_{P}^{n}(R)$) be the ball $\{x:|x|\leq
R\}$ in $\mathbb{R}_{x}^{n}$ (\textit{resp}. $\{p:|p|\leq R\}$ in
$\mathbb{R}_{p}^{n}$). (i) We have
\begin{equation}
B_{X}^{n}(R)^{\hbar}=B_{P}^{n}(\hbar/R)~. \label{BhR}%
\end{equation}
In particular
\begin{equation}
B_{X}^{n}(\sqrt{\hbar})^{\hbar}=B_{P}^{n}(\sqrt{\hbar})~. \label{bhh}%
\end{equation}
(ii) Let $A=A^{T}$ be positive definite. We have
\begin{equation}
\{x:Ax\cdot x\leq R^{2}\}^{\hbar}=\{p:A^{-1}p\cdot p\leq(\hbar/R)^{2}\}
\label{dualell}%
\end{equation}
and hence
\begin{equation}
\{x:Ax\cdot x\leq\hbar\}^{\hbar}=\{p:A^{-1}p\cdot p\leq\hbar\}~.
\label{dualellh}%
\end{equation}

\end{lemma}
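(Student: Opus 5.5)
The plan is to prove (i) directly from the definition (\ref{omo}) via the Cauchy--Schwarz inequality, and then to reduce (ii) to (i) using the scaling property $(LX)^{\hbar}=(L^{T})^{-1}X^{\hbar}$ stated just above. The specializations (\ref{bhh}) and (\ref{dualellh}) are then simply the cases $R=\sqrt{\hbar}$, for which $\hbar/R=\sqrt{\hbar}$ and $(\hbar/R)^{2}=\hbar$.

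For (i), I first show $B_{P}^{n}(\hbar/R)\subset B_{X}^{n}(R)^{\hbar}$. If $|p|\leq\hbar/R$ and $|x|\leq R$, Cauchy--Schwarz gives $p\cdot x\leq|p||x|\leq\hbar$, so $p$ lies in the polar. For the reverse inclusion, take $p\neq0$ in $B_{X}^{n}(R)^{\hbar}$ and test it against the extremal point $x=Rp/|p|\in B_{X}^{n}(R)$. This gives $R|p|\leq\hbar$, that is, $|p|\leq\hbar/R$. The case $p=0$ is trivial.

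For (ii), let $A^{1/2}$ be the positive definite square root of $A$. Then
\[
\{x:Ax\cdot x\leq R^{2}\}=\{x:|A^{1/2}x|\leq R\}=A^{-1/2}B_{X}^{n}(R).
\]
Applying the scaling property with $L=A^{-1/2}$, which is symmetric and invertible so that $(L^{T})^{-1}=A^{1/2}$, gives
\[
\{x:Ax\cdot x\leq R^{2}\}^{\hbar}=A^{1/2}B_{P}^{n}(\hbar/R).
\]
Now $p=A^{1/2}q$ with $|q|\leq\hbar/R$ holds exactly when $|A^{-1/2}p|\leq\hbar/R$, which is the condition $A^{-1}p\cdot p\leq(\hbar/R)^{2}$. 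This is (\ref{dualell}).

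No step is a real obstacle. The only points needing care are the choice of extremal vector $x=Rp/|p|$, which gives equality in Cauchy--Schwarz, and the bookkeeping of transposes and inverses in the scaling law. If one prefers not to rely on the scaling property (it was only called ``obvious''), it can be checked in one line: $p\cdot Lx=L^{T}p\cdot x$, so $p\in(LX)^{\hbar}$ if and only if $L^{T}p\in X^{\hbar}$.
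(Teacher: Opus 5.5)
Your proof is correct and follows essentially the same route as the paper: you use the extremal vector $x=Rp/|p|$ together with Cauchy--Schwarz for (i), and for (ii) you write the ellipsoid as $A^{-1/2}B_{X}^{n}(R)$ and apply the scaling law $(LX)^{\hbar}=(L^{T})^{-1}X^{\hbar}$. You are also slightly more careful than the paper: you treat $p=0$ separately, check the scaling law in one line, and spell out why $A^{1/2}B_{P}^{n}(\hbar/R)$ equals $\{p:A^{-1}p\cdot p\leq(\hbar/R)^{2}\}$.
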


\begin{proof}
Let us show that $B_{X}^{n}(R)^{\hbar}\subset B_{P}^{n}(\hbar/R)$. Let $p\in
B_{X}^{n}(R)^{\hbar}$ and set $x=(R/|p|)p$; we have $|x|=R$ and hence
$px\leq\hbar$, that is $R|p|\leq\hbar$ and $p\in B_{P}^{n}(\hbar/R)$. To prove
the opposite inclusion choose $p\in B_{P}^{n}(\hbar/R)$. We have $|p|\leq
\hbar/R$ and hence, by the Cauchy--Schwarz inequality, $px\leq|x||p|\leq
\hbar|x|/R$, that is $px\leq\hslash$ for all $x$ such that $|x|\leq R$; this
means that $p\in B_{X}^{n}(R)^{\hbar}$. \textit{(ii)} The ellipsoid
$\{x:Ax^{2}\leq R^{2}\}^{\hbar}$ is the image of $B_{X}^{n}(R)$ by the
automorphism $A^{-1/2}$; in view of formula (\ref{dualell}) it follows from
the scaling property and formula (\ref{BhR}) that%
\[
\{x:Ax\cdot x\leq R^{2}\}^{\hbar}=A^{1/2}B_{X}^{n}(R)^{\hslash}=A^{1/2}%
B_{P}^{n}(\hbar/R)
\]
which is equivalent to (\ref{dualell}).
\end{proof}

Defining the Mahler volume \cite{Mahler} of $X$ by
\begin{equation}
\upsilon(X)=\operatorname*{Vol}\nolimits_{n}(X)\operatorname*{Vol}%
\nolimits_{n}(X^{\hbar})=\operatorname*{Vol}\nolimits_{2n}(X\times X^{\hbar})
\label{Mahler}%
\end{equation}
the Blaschke--Santal\'{o} inequality
\[
v(X)\leq(\operatorname*{Vol}\nolimits_{n}(B^{n}(\hbar)))^{2}%
\]
holds; equality holds when $X$ is an ellipsoid. it is conjectured that%
\begin{equation}
\upsilon(X)\geq\frac{(4\hbar)^{n}}{n!} \label{Mahlerbis}%
\end{equation}
(\textquotedblleft Mahler's conjecture\textquotedblright), which can be viewed
as an uncertainty inequality. For a discussion of the related "Mahler
conjecture" see \cite{arxiv3}.

We now introduce the notion of quantum polar pair, which will play a central
role throughout this paper:

\begin{definition}
A pair $(X,P)\ $of centrally symmetric convex bodies $X\subset\mathbb{R}%
_{x}^{n}$ and $P\subset\mathbb{R}_{p}^{n}$ is called a "quantum polar
pair\textquotedblright\ if we have
\begin{equation}
X^{\hbar}\subset P\text{ or,equivalently, }P^{\hbar}\subset X. \label{XP}%
\end{equation}
When equality occurs, that is if $X^{\hbar}=P$, we say that the quantum polar
pair $(X,P)$ is saturated.
\end{definition}

Notice that, due to the antimonotonicity of polar duality, if $(X,P)$ be a
quantum pair then so is $(Y,Q)$ if $Y,Q$ be symmetric convex \ bodies such
that $X\subset Y$ and $P\subset Q$.

Here is the central result relating our previous constructions to the notion
of symplectic capacity. It relies on a very deep and difficult proof in
\cite{Artstein} in an attempt to prove the Mahler conjecture. (See Appendix A
for a review of the notion of symplectic capacity, and related concepts).

\begin{theorem}
\label{Thm5}Let $(X\times X^{\hbar})$ be a saturated quantum polar pair. We
have%
\begin{equation}
c_{^{\text{HZ}}}(X\times X^{\hbar})=4\hbar\label{clh}%
\end{equation}
where $c_{^{\text{HZ}}}$ is the Hofer--Zehnder capacity. For a general quantum
dual pair \ $(X\times P)$, we have
\begin{equation}
c_{^{\text{HZ}}}(X\times P)=4\lambda_{\max}\hbar\label{yaron1}%
\end{equation}
where $\lambda_{\max}\geq1$ is the number
\begin{equation}
\lambda_{\max}=\max\{\lambda>0:\lambda X^{\hbar}\subset P\}. \label{amax}%
\end{equation}

\end{theorem}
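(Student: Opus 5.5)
The plan is to reduce everything to the main theorem of Artstein-Avidan, Karasev and Ostrover \cite{Artstein}, which I recall as follows; the precise statement still has to be checked against that paper. For centrally symmetric convex bodies $K\subset\mathbb{R}_{x}^{n}$ and $T\subset\mathbb{R}_{p}^{n}$ one has
\[
c_{\text{HZ}}(K\times T)=4\max\{r>0:rK^{o}\subset T\},
\]
where $K^{o}=X^{\hbar}/\hbar$ is the usual ($\hbar=1$) polar. In \cite{Artstein} this comes from the identification of $c_{\text{HZ}}(K\times T)$ with the minimal length of closed $(K,T)$-Minkowski billiard trajectories. Central symmetry then shows that the minimum is attained by a $2$-bounce trajectory through the origin, whose length is $4$ times the $T$-norm inradius of $K^{o}$. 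I will take this as the deep input and only transport it to the $\hbar$-normalized setting. The hypotheses needed for (\ref{yaron1}) are exactly that $X$ and $P$ are centrally symmetric convex bodies.

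First I would dispose of $\hbar$ by a conformally symplectic dilation. The map $\phi(x,p)=(x,p/\hbar)$ satisfies $\phi^{\ast}\sigma=\hbar^{-1}\sigma$. The Hofer--Zehnder capacity is conformal (Appendix A), so $c_{\text{HZ}}(\phi(U))=\hbar^{-1}c_{\text{HZ}}(U)$, and therefore
\[
c_{\text{HZ}}(X\times P)=\hbar\,c_{\text{HZ}}(X\times\hbar^{-1}P).
\]
Since $X^{\hbar}=\hbar X^{o}$, the condition $\lambda X^{\hbar}\subset P$ is equivalent to $\lambda X^{o}\subset\hbar^{-1}P$. Hence the number $\lambda_{\max}$ defined in (\ref{amax}) is precisely the inradius $\max\{r:rX^{o}\subset\hbar^{-1}P\}$ in the cited formula. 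Applying that formula with $K=X$ and $T=\hbar^{-1}P$ then gives (\ref{yaron1}).

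Three further checks complete the argument. The maximum in (\ref{amax}) is attained because $P$ is compact, so $\lambda_{\max}$ is well defined. The quantum polar condition $X^{\hbar}\subset P$ says exactly that $\lambda=1$ is admissible, which gives $\lambda_{\max}\geq1$. In the saturated case $P=X^{\hbar}$, no $\lambda>1$ is admissible, because $\lambda X^{\hbar}\subset X^{\hbar}$ fails for a body containing the origin in its interior; so $\lambda_{\max}=1$ and (\ref{clh}) follows. As a consistency check I would also verify the homothetic case $P=\lambda X^{\hbar}$ directly. The map $(x,p)\mapsto(x,\lambda p)$ is conformally symplectic with factor $\lambda$, so $c_{\text{HZ}}(X\times\lambda X^{\hbar})=4\lambda\hbar$, in agreement with the formula.

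The main obstacle is that the entire content of (\ref{yaron1}) rests on correctly invoking \cite{Artstein}. I need to confirm that its result is stated for the pair $K\times T$ with arbitrary symmetric $T$, and not only for $T=K^{o}$. I also need the exact normalization: the constant $4$, and whether the inradius is taken of $K^{o}$ in $T$ or of $T^{o}$ in $K$; these coincide by biduality and antimonotonicity. The inequality $\geq$ in (\ref{yaron1}) is harmless: $X\times\lambda_{\max}X^{\hbar}\subset X\times P$, and monotonicity of the capacity applied to the homothetic case gives $c_{\text{HZ}}(X\times P)\geq4\lambda_{\max}\hbar$. The reverse inequality is the genuinely deep part. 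If the general formula for $K\times T$ is not available in quotable form, I would try to prove it directly. The idea is to take a touching point $p_{0}\in\partial P\cap\lambda_{\max}\partial X^{\hbar}$ and the dual point $x_{0}\in\partial X$ with $p_{0}\cdot x_{0}=\lambda_{\max}\hbar$. One then builds the closed $2$-bounce characteristic on $\partial(X\times P)$ through $\pm x_{0}$ and $\pm p_{0}$, whose action is $4\lambda_{\max}\hbar$. The remaining difficulty is to show that this orbit realizes the Hofer--Zehnder capacity, which for convex domains equals the minimal action of closed characteristics; that minimality is exactly the nontrivial content of \cite{Artstein}.
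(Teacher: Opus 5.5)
Your proposal is correct and follows essentially the paper's route. Both take the $\hbar=1$ result of Artstein-Avidan, Karasev and Ostrover on Lagrangian products as the deep input and transfer it to general $\hbar$ by conformality of $c_{\text{HZ}}$. The paper quotes only the case $X\times X^{o}$ and says (\ref{yaron1}) follows ``similarly'', whereas you quote the general $K\times T$ formula and specialize.

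One correction to your fallback: the inequality $c_{\text{HZ}}(X\times P)\leq4\lambda_{\max}\hbar$ is the easy direction, not the deep one. On a convex body $c_{\text{HZ}}$ is the minimal action of (generalized) closed characteristics, so it suffices to exhibit one such characteristic with action $4\lambda_{\max}\hbar$. Your $2$-bounce loop through $\pm x_{0},\pm p_{0}$ is exactly that, provided $x_{0}$ is chosen as an outer normal of $P$ at $p_{0}$ normalized by $p_{0}\cdot x_{0}=\lambda_{\max}\hbar$, rather than as an arbitrary maximizer of $p_{0}\cdot x$ on $X$. Combined with your monotonicity bound, this already derives (\ref{yaron1}) from (\ref{clh}). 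So the only deep input is the lower bound $c_{\text{HZ}}(X\times X^{\hbar})\geq4\hbar$, and no minimality property of your orbit is needed.
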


\begin{proof}
For a detailed proof when $X$ is an ellipsoid see \cite{gopolar}; also Prop. 3
in \cite{ACHAPOLAR}). In the general case one has to use results in
\cite{Artstein} Artstein-Avidan \textit{et al}. show that for $\hbar=1$ we
have $c_{\max}(X\times X^{1})=4$. Using the obvious relation $X=\hbar X^{1}$
we have, by the conformality property of symplectic capacities,%
\begin{align*}
c_{^{\text{HZ}}}(X\times X^{\hbar})  &  =c_{^{\text{HZ}}}(\hbar^{1/2}%
(\hbar^{-1/2}X\times\hbar^{1/2}X^{1}))\\
&  =\hbar c_{^{\text{HZ}}}(\hbar^{-1/2}X\times\hbar^{1/2}X^{1})\\
&  =\hbar c_{^{\text{HZ}}}(X\times X^{\hbar})
\end{align*}
the last equality because
\[
\hbar^{-1/2}X\times\hbar^{1/2}X^{1}=M_{\hbar^{1/2}}(X,X^{1})
\]
with $M_{\hbar^{1/2}}\in\operatorname*{Sp}(n)$; hence $cc_{^{\text{HZ}}%
}(X\times X^{\hbar})=$ $4\hbar$. Formula (\ref{clh}) follows since using the
symplectic invariance of symplectic capacities. A similar argument using in
\cite{Artstein} leads to (\ref{yaron1}).
\end{proof}

\begin{corollary}
Each saturated quantum dual pair $(X,X^{\hbar})$ is such that the infimum of
the areas \textit{of the smallest disk containing} $\pi_{j}(X\times X^{\hbar
})\}$is $\geq4\hbar$ .
\end{corollary}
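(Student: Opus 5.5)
The corollary is a symplectic non-squeezing type statement. The plan is to combine the equality $c_{\mathrm{HZ}}(X\times X^{\hbar})=4\hbar$ from Theorem \ref{Thm5} with Gromov's theorem, in the form of the comparison between capacities recalled in Appendix A. I read $\pi_j$ as the orthogonal projection of $\mathbb{R}^{2n}_z$ onto the conjugate coordinate plane $\mathbb{R}^2_{x_j,p_j}$. I read the statement as asserting that the area of the smallest disk containing $\pi_j(f(X\times X^{\hbar}))$ is at least $4\hbar$. Here $f$ ranges over the symplectomorphisms for which the infimum is taken (at least over $\operatorname{Sp}(n)$, and more generally over symplectomorphisms), and $j$ is any index.

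\textbf{Step 1.} Let $f$ be any admissible symplectomorphism, and let $D_j(r)$ be a closed disk of radius $r$ in the $(x_j,p_j)$-plane with $\pi_j(f(X\times X^{\hbar}))\subset D_j(r)$. Then
\[
f(X\times X^{\hbar})\subset Z_j(r):=\{z\in\mathbb{R}^{2n}_z:\pi_j(z)\in D_j(r)\},
\]
which is a symplectic cylinder of radius $r$ (after translating the center of the disk to the origin).

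\textbf{Step 2.} Apply monotonicity and symplectic invariance of $c_{\mathrm{HZ}}$, together with the normalization $c_{\mathrm{HZ}}(Z_j(r))=\pi r^2$. This gives
\[
4\hbar=c_{\mathrm{HZ}}(X\times X^{\hbar})=c_{\mathrm{HZ}}(f(X\times X^{\hbar}))\le c_{\mathrm{HZ}}(Z_j(r))=\pi r^2 .
\]
Translation invariance is automatic, since translations are symplectomorphisms. Hence every disk containing the projection has area $\pi r^2\ge 4\hbar$. Taking first the smallest such disk and then the infimum over $f$ yields the claim.

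\textbf{Main obstacle.} The only delicate point is the normalization of $c_{\mathrm{HZ}}$ on the cylinder. Every symplectic capacity has $c(Z_j(r))=\pi r^2$, but only for a cylinder built on a \emph{conjugate} pair $(x_j,p_j)$. The argument fails for non-conjugate planes, so I would state explicitly that $\pi_j$ projects onto $\mathbb{R}^2_{x_j,p_j}$. I would also note that $c_{\mathrm{HZ}}$ is defined on arbitrary subsets via the usual extension, so the value on the (unbounded) cylinder is legitimate. If $f$ is required to be only a symplectic embedding rather than a global map, Step 2 still holds, because monotonicity for symplectic embeddings is part of the capacity axioms.
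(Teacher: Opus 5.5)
Your argument is correct and is essentially the paper's proof. The paper identifies the infimum of the disk areas with $c_{\max}(X\times X^{\hbar})$ and then appeals to $c_{\max}\geq c_{\mathrm{HZ}}$ together with Theorem~\ref{Thm5}; you unpack that inequality directly, using monotonicity, symplectic and translation invariance, and the normalization $c_{\mathrm{HZ}}(Z_j(r))=\pi r^2$ on cylinders over a conjugate plane, and in doing so you make explicit details (conjugate planes, applying $c_{\mathrm{HZ}}$ to compact sets and unbounded cylinders) that the paper leaves implicit.
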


\begin{proof}
Definition (\ref{cmax}) of $c_{\max}(\Omega)$ is equivalent to the following
statement: let $\pi_{j}:\mathbb{R}^{2n}\rightarrow\mathbb{R}^{2}$ be the
orthogonal projection $\pi(x_{1},y_{1},\dots,x_{n},y_{n})=(x_{j},p_{j})$. We
have
\[
c_{\max}(\Omega)=\inf_{f\in\mathrm{Symp}(n)}\{\text{\textit{area of the
smallest disk containing} }\pi_{j}(f\Omega))\}.
\]
for any $j=1,...,n$. The result follows from fact that $c_{\max}\geq
c_{^{\text{HZ}}}$. The corollary follows, taking $\Omega=X\times X^{\hbar}$.
\end{proof}

\subsection{Polar duality and function concentration}

The following justifies the use of polar duality to measure indeterminacy in a
geometric say. Assume that $\mathcal{C}_{X}$ and $\mathcal{C}_{P}$ measurable
sets. We will say that a function $\psi\in L^{2}(\mathbb{R}^{n})$ such tat
$||\psi||_{L^{2}}=1$ and its Fourier transform $\widehat{\psi}$ are,
respectively, $\varepsilon_{X}$ and $\varepsilon_{P}$ concentrated on
$\mathcal{C}_{X}$ and $\mathcal{C}_{P}$ if we have
\begin{equation}
\int_{\mathcal{C}_{X}}|\psi(x)|^{2}dx\geq1-\varepsilon_{x}^{2}\text{ }%
\int_{\mathcal{C}P}|\widehat{\psi}(p)|^{2}dp\geq1-\varepsilon_{P}^{2}.
\label{ksiksi}%
\end{equation}
The Donoho--Stark uncertainty principle says that if for $\varepsilon
_{X}+\varepsilon_{P}<1$ then
\begin{equation}
\operatorname*{Vol}\nolimits_{2n}(\mathcal{C}_{X}\times\mathcal{C}_{P}%
)\geq(2\pi\hbar)^{n}(1-\varepsilon_{X}-\varepsilon_{P})^{2} \label{DS}%
\end{equation}
(see \cite{Gro} for an elegant proof).

\begin{proposition}
Assume that $X\subset\mathcal{C}_{X}$. If
\begin{equation}
\int_{X}|\psi(x)|^{2}dx\geq1-\varepsilon_{x}^{2}\text{ \ and \ }\int%
_{X^{\hbar}}|\widehat{\psi}(p)|^{2}dp\geq1-\varepsilon_{P}^{2} \label{blasa}%
\end{equation}
then we must have%
\begin{equation}
0\leq(2\pi\hbar)^{n}(1-\varepsilon_{X}-\varepsilon_{P})^{2}\leq\frac{(\pi
\hbar)^{n}}{\Gamma(n/2+1)^{2}}. \label{dostpolar}%
\end{equation}

\end{proposition}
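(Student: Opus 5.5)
We would apply the Donoho--Stark inequality (\ref{DS}) with $\mathcal{C}_X=X$ and $\mathcal{C}_P=X^{\hbar}$, which is legitimate because (\ref{blasa}) is exactly the concentration hypothesis (\ref{ksiksi}) for these two sets. The left inequality $0\leq(2\pi\hbar)^{n}(1-\varepsilon_{X}-\varepsilon_{P})^{2}$ is trivial since it is a square times a positive constant. The first real step is therefore
\[
(2\pi\hbar)^{n}(1-\varepsilon_{X}-\varepsilon_{P})^{2}\leq\operatorname*{Vol}\nolimits_{2n}(X\times X^{\hbar})=\upsilon(X).
\]
We only need $\varepsilon_X+\varepsilon_P<1$ for (\ref{DS}) to apply. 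In the complementary case the desired upper bound cannot be obtained this way, but the statement is meant under the Donoho--Stark regime, and we will note this assumption explicitly.

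The second step bounds the Mahler volume $\upsilon(X)$ from above by the Blaschke--Santal\'o inequality stated after (\ref{Mahler}): $\upsilon(X)\leq(\operatorname*{Vol}_n(B^n(r)))^2$, where $B^n(r)$ is the ball whose polar is itself. We would first check the correct radius. By (\ref{bhh}) the self-$\hbar$-polar ball is $B^n(\sqrt{\hbar})$, and $X\times X^{\hbar}$ has volume invariant under $X\mapsto LX$ by the scaling property. Hence the maximum equals $\operatorname*{Vol}_n(B^n(\sqrt\hbar))^2$ rather than the stated $B^n(\hbar)$, which appears to be a typo. Using $\operatorname*{Vol}_n(B^n(R))=\pi^{n/2}R^n/\Gamma(n/2+1)$, we get
\[
\operatorname*{Vol}\nolimits_n(B^n(\sqrt\hbar))^2=\frac{(\pi\hbar)^n}{\Gamma(n/2+1)^2},
\]
which is exactly the right-hand side of (\ref{dostpolar}). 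Chaining the two inequalities proves the claim.

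The hypothesis $X\subset\mathcal C_X$ plays no role beyond identifying which sets we feed into (\ref{DS}). If one wanted a statement for $\mathcal C_X$ itself, concentration on $X$ would imply concentration on $\mathcal C_X$, but the volume bound only controls $X\times X^{\hbar}$, so we would keep everything on $X$.

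The main obstacle is just normalization bookkeeping. We must reconcile the $\hbar$-scaled Blaschke--Santal\'o bound with the Fourier convention implicit in (\ref{DS}), namely the $(2\pi\hbar)^n$ factor that corresponds to the $\hbar$-Fourier transform. We would verify this with the Gaussian/ball equality case, and otherwise simply invoke the stated results.
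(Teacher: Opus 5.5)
Your proposal is correct and is essentially the paper's proof: Donoho--Stark applied with $\mathcal{C}_X=X$, $\mathcal{C}_P=X^{\hbar}$ gives $(2\pi\hbar)^n(1-\varepsilon_X-\varepsilon_P)^2\leq\upsilon(X)$, and the Blaschke--Santal\'o bound $\upsilon(X)\leq\operatorname{Vol}_n(B^n(\sqrt{\hbar}))^2=(\pi\hbar)^n/\Gamma(n/2+1)^2$ finishes it; the paper's proof also uses radius $\sqrt{\hbar}$, which confirms your typo diagnosis. Your caveat $\varepsilon_X+\varepsilon_P<1$, which the paper assumes only implicitly, is in fact essential and not just a limitation of the method: for $\varepsilon_X=\varepsilon_P=1$ the hypotheses are vacuous, yet the conclusion would require $2^n\Gamma(n/2+1)^2\leq 1$, which is false for every $n\geq 1$.
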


\begin{proof}
The Mahler volume $\upsilon(X)=\operatorname*{Vol}\nolimits_{n}%
(X)\operatorname*{Vol}\nolimits_{n}(X^{\hbar})$ satisfies the
Blaschke--Santal\'{o} inequity%
\[
\upsilon(X)\leq\left[  \operatorname*{Vol}\nolimits_{n}B^{n}(\sqrt{\hbar
})\right]  ^{2}=\frac{(\pi\hbar)^{n\Gamma}}{\Gamma(n/2+1)^{2}}.
\]
It follows from the Donoho--Stark principle that%
\[
\left[  \operatorname*{Vol}\nolimits_{n}B^{n}(\sqrt{\hbar})\right]  ^{2}%
.\geq(2\pi\hbar)^{n}(1-\varepsilon_{X}-\varepsilon_{P})^{2}%
\]
hence the inequality (\ref{dostpolar}).
\end{proof}

It follows from For Stirling's formula,that for $n\rightarrow\infty$ we have
\begin{equation}
0<(2\pi\hbar)^{n}(1-\varepsilon_{X}-\varepsilon_{P})^{2}\lesssim\frac{1}{\pi
n}\left(  \frac{2e\pi\hbar}{n}\right)  ^{n}. \label{asy,pt}%
\end{equation}
It follows that for large $n$ or small$\hbar$ we have $\varepsilon
_{X}+\varepsilon_{P}\cong1$, which illustrates the trade-off between the
concentration of a function and its Fourier transform.

Here is another justification of the use of polar duality in the formulation
of uncertainty principles\cite{ACHAPOLAR,golulett}. In \cite{ACHAPOLAR,goluPR}
we proved:

\begin{proposition}
\label{prophardy}Let $A$ and $B$ be two real positive definite matrices and
$\psi\in L^{2}(\mathbb{R}^{n})$, $\psi\neq0$. Assume that there exist
constants $C_{A},C_{B}>0$ such that
\begin{equation}
|\psi(x)|\leq C_{A}e^{-\tfrac{1}{2\hbar}Ax\cdot x}\ \text{and}\ |\widehat{\psi
}(p)|\leq C_{B}e^{-\tfrac{1}{2\hbar}Bp\cdot p}. \label{Hardy}%
\end{equation}
Then: (i) The eigenvalues $\lambda_{j}$, $j=1,...,n$, of $AB$ are $\leq1$;
(ii) If $\lambda_{j}=1$ for all $j$, then $\psi(x)=Ce^{-\frac{1}{2\hbar}%
Ax^{2}}$ for some some complex constant $C$; (iii) If $\lambda_{j}<1$ for some
index $j$ then there exist functions $\psi(x)=Q(x)e^{-\tfrac{1}{2\hbar}Ax^{2}%
}$ ($Q$ a polynomial) satisfying the (\ref{Hardy}).
\end{proposition}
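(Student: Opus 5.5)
Our plan is to reduce the statement to the classical one-dimensional Hardy uncertainty principle: if $|f(t)|\le Ce^{-at^{2}/2\hbar}$ and $|\widehat f(\tau)|\le Ce^{-b\tau^{2}/2\hbar}$ with the $\hbar$-normalized Fourier transform, then $ab\le1$; if $ab=1$, $f$ is a constant multiple of $e^{-at^{2}/2\hbar}$; and if $ab<1$, there are nonzero solutions of the form (polynomial)$\times$Gaussian. The reduction rests on two simultaneous changes of variables, exploiting the covariance of the Fourier transform: if $\psi_{L}(x)=\sqrt{|\det L|}\,\psi(Lx)$, then $\widehat{\psi_{L}}(p)=\sqrt{|\det L|^{-1}}\,\widehat{\psi}((L^{T})^{-1}p)$, which is the same contragredient action that appears in the scaling property of $\hbar$-polarity.

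First we normalize $A$. Take $L=A^{-1/2}$. The bound on $\psi$ becomes $|\psi_{L}(x)|\le C'e^{-|x|^{2}/2\hbar}$, and the bound on $\widehat{\psi}$ becomes $|\widehat{\psi_{L}}(p)|\le C''e^{-A^{-1/2}BA^{-1/2}\,p\cdot p/2\hbar}$. The matrix $B'=A^{-1/2}BA^{-1/2}$ is symmetric positive definite and has the same eigenvalues as $AB$. Next, for an orthogonal $R$, the substitution $x\mapsto Rx$ leaves $|x|^{2}$ invariant and acts on $\widehat{\psi}$ by $p\mapsto Rp$. We choose $R$ so that $B'$ is diagonal, $B'=\mathrm{diag}(\lambda_{1},\dots,\lambda_{n})$. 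We may thus assume $A=I$ and $B=\mathrm{diag}(\lambda_{j})$.

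For (i), we reduce to one variable. Fix $j$ and freeze the remaining variables. Since both bounds are Gaussian in every variable, Fubini and the partial Fourier transform in the other $n-1$ variables behave well. We then consider the function $f(x_{j})=\int\psi(x)\overline{\phi(x')}\,dx'$, where $x'$ denotes the other variables and $\phi$ is a Gaussian chosen so that $f\neq0$. One can always find such $\phi$ because Hermite functions are dense and $\psi\neq0$; more simply, one can use the partial Fourier transform evaluated at a point $p'$ where it is nonzero. This $f$ satisfies one-dimensional bounds with constants $1$ and $\lambda_{j}$, so the one-dimensional Hardy theorem gives $\lambda_{j}\le1$.

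For (ii), with all $\lambda_{j}=1$ we apply the one-dimensional equality case variable by variable. One way is to form $e^{|x|^{2}/2\hbar}\psi$ and show that it is entire of order $2$ and bounded, using the Phragm\'en--Lindel\"of argument from Hardy's proof done in each variable separately; the function is then constant, and undoing the change of variables gives $\psi=Ce^{-Ax\cdot x/2\hbar}$. For (iii), if some $\lambda_{j}<1$, we take $\psi(x)=H_{k}(x_{j}/\sqrt{\hbar})e^{-|x|^{2}/2\hbar}$. This Hermite function is an eigenfunction of the Fourier transform, so its transform also has Gaussian decay $e^{-|p|^{2}/2\hbar}$ times a polynomial, which is dominated by $C_{B}e^{-\lambda_{j}p_{j}^{2}/2\hbar}$ because $\lambda_{j}<1$. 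Transforming back gives a polynomial times $e^{-Ax\cdot x/2\hbar}$.

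The main obstacle is the equality case (ii). There the one-dimensional Phragm\'en--Lindel\"of argument must be run in several complex variables, with uniform control over the frozen variables, and this needs care.
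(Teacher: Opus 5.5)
Your argument for (iii) fails. In your normalized coordinates the position-side bound is $|\psi(x)|\le C_A e^{-|x|^2/2\hbar}$, with no slack. Your candidate $H_k(x_j/\sqrt{\hbar})\,e^{-|x|^2/2\hbar}$ violates it for every $k\ge 1$, because the polynomial factor is unbounded; you only checked the Fourier side. In fact no nonconstant $Q$ can work in the literal form of (iii): $|Q(x)|e^{-Ax\cdot x/2\hbar}\le C_A e^{-Ax\cdot x/2\hbar}$ forces $Q$ to be bounded, hence constant. Read literally, (iii) is therefore true only in the trivial sense that $\psi=e^{-Ax\cdot x/2\hbar}$ itself works. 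Indeed, $\widehat{\psi}$ is a multiple of $e^{-A^{-1}p\cdot p/2\hbar}$, and $A^{-1}\ge B$ exactly when all $\lambda_k\le 1$. Statement (iii) tacitly assumes this, since otherwise (i) excludes every $\psi\neq 0$.

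The substantive content is the existence of infinitely many linearly independent solutions, which is what separates (iii) from the rigidity in (ii). To get it, you must spend the slack $\lambda_j<1$ on the position side too. Take $\psi(x)=Q(x_j)\,e^{-\mu x_j^2/2\hbar}\,e^{-|x'|^2/2\hbar}$ with $1<\mu<1/\lambda_j$. Then $|\psi(x)|\le Ce^{-|x|^2/2\hbar}$ because $\mu>1$. Also $\widehat{\psi}$ is a polynomial times $e^{-p_j^2/2\mu\hbar}e^{-|p'|^2/2\hbar}$, which is $\le Ce^{-\Lambda p\cdot p/2\hbar}$ because $1/\mu>\lambda_j$ and $\lambda_k\le 1$ for all $k$. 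Undoing the change of variables gives a polynomial times $e^{-A'x\cdot x/2\hbar}$ with $A'\ge A$, $A'\neq A$, not times $e^{-Ax\cdot x/2\hbar}$.

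Apart from this, your route is the standard one and is fine for (i). The paper gives no proof and simply quotes the result from the author's earlier work. There are two smaller corrections.

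First, with $L=A^{-1/2}$ one has $(L^T)^{-1}=A^{1/2}$, so the new momentum matrix is $A^{1/2}BA^{1/2}$, which is similar to $AB$. The matrix $A^{-1/2}BA^{-1/2}$ you wrote is similar to $A^{-1}B$, and it is not even dimensionless.

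Second, (ii) is not the obstacle you expect. As written, $e^{|x|^2/2\hbar}\psi$ is not entire; the holomorphic object would be $e^{z\cdot z/2\hbar}\psi(z)$. More to the point, no several-variable Phragm\'en--Lindel\"of argument is needed. With $A=B=I$, each section $x_1\mapsto\psi(x_1,x')$ is bounded by $C e^{-x_1^2/2\hbar}$. Its one-variable transform, $(2\pi\hbar)^{-(n-1)/2}\int e^{ip'\cdot x'/\hbar}\widehat{\psi}(p_1,p')\,dp'$, is bounded by $Ce^{-p_1^2/2\hbar}$. The one-dimensional equality case then gives $\psi(x_1,x')=c(x')e^{-x_1^2/2\hbar}$. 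Doing this in each coordinate shows that $e^{|x|^2/2\hbar}\psi(x)$ is independent of every variable, hence constant.
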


This multidimensional version of the Hardy uncertainty principle can be
elegantly expressed in terms of polar duality:

\begin{corollary}
\label{corhardy}Assume that \ $\psi\in L^{2}(\mathbb{R}^{n})$ satisfies the
Hardy inequalities (\ref{Hardy}). Let us set
\[
X_{A}=\{x:Ax\cdot x\leq\hbar\}\text{ \ },\text{ \ }P_{B}=\{p:Bp\cdot
p\leq\hbar\}.
\]
$~$Suppose (i) \ The Hardy inequalities. are satisfied if and only if
$(X_{A},P_{B})$ is a quantum dual pair, that is $X_{A}^{\hbar}\subset P_{B}$.
(ii) When $X_{A}^{\hbar}=P_{B}$ then $\psi(x)=Ce^{-\frac{1}{2\hbar}Ax^{2}}$
for some complex number $C\neq0$.
\end{corollary}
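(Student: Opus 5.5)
The plan is to reduce everything to Proposition~\ref{prophardy} by translating its eigenvalue condition into an inclusion between ellipsoids, using the explicit polar of an ellipsoid from the Lemma. The reading of (i) I will prove is: a nonzero $\psi$ satisfying the Hardy inequalities (\ref{Hardy}) can exist only if $X_{A}^{\hbar}\subset P_{B}$; conversely, if $X_{A}^{\hbar}\subset P_{B}$, then such a nonzero $\psi$ exists.

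\textbf{Step 1: compute the polar.} By (\ref{dualellh}) of the Lemma,
\[
X_{A}^{\hbar}=\{p:A^{-1}p\cdot p\leq\hbar\}.
\]
So the inclusion $X_{A}^{\hbar}\subset P_{B}$ says that $A^{-1}p\cdot p\leq\hbar$ implies $Bp\cdot p\leq\hbar$. By homogeneity of the two quadratic forms, this is equivalent to the matrix inequality $B\leq A^{-1}$ in the Loewner order.

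\textbf{Step 2: translate into eigenvalues.} Conjugating by $A^{1/2}$, the inequality $B\leq A^{-1}$ is equivalent to $A^{1/2}BA^{1/2}\leq I$. The matrix $A^{1/2}BA^{1/2}$ is similar to $AB$, so this holds if and only if every eigenvalue $\lambda_{j}$ of $AB$ satisfies $\lambda_{j}\leq1$. The same computation shows that equality $X_{A}^{\hbar}=P_{B}$ holds if and only if $B=A^{-1}$, that is, if and only if all $\lambda_{j}=1$.

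\textbf{Step 3: conclude.} For (i), the forward direction is Proposition~\ref{prophardy}(i) combined with Steps 1 and 2. For the converse, assume $B\leq A^{-1}$ and take the Gaussian $\psi=e^{-Ax\cdot x/2\hbar}$. With the $\hbar$-Fourier transform, $\widehat{\psi}$ is a constant multiple of $e^{-A^{-1}p\cdot p/2\hbar}$, and $B\leq A^{-1}$ gives
\[
e^{-A^{-1}p\cdot p/2\hbar}\leq e^{-Bp\cdot p/2\hbar}.
\]
So both Hardy bounds hold. (Proposition~\ref{prophardy}(iii) would give further examples when some $\lambda_{j}<1$, but they are not needed.) For (ii), the equality case forces all $\lambda_{j}=1$ by Step 2, and Proposition~\ref{prophardy}(ii) then gives $\psi=Ce^{-Ax^{2}/2\hbar}$ with $C\neq0$, since $\psi\neq0$.

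\textbf{Main obstacle.} The only delicate point is the passage from the set inclusion to the Loewner inequality in Step 1. This follows from homogeneity: for $p\neq0$, rescale $p$ so that $A^{-1}p\cdot p=\hbar$. The rest is bookkeeping, together with fixing the interpretation of the statement's "if and only if" as above.
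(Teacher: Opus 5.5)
Your proposal is correct and follows the paper's proof. Like the paper, you rewrite $X_A^{\hbar}\subset P_B$ as $B\le A^{-1}$ using the polar of an ellipsoid, turn this into the eigenvalue condition on $AB$ (with equality $X_A^{\hbar}=P_B$ exactly when all eigenvalues equal $1$), and then apply parts (i) and (ii) of Proposition~\ref{prophardy}. Your explicit Gaussian $e^{-Ax\cdot x/2\hbar}$ also supplies the converse half of the ``if and only if'' in (i), which the paper's proof leaves implicit.
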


\begin{proof}
(i) The condition $X_{A}^{\hbar}\subset P_{B}$ is equivalent to $A^{-1}\geq B$
(L\"{o}wner ordering), this is the same as saying that the eigenvalues of $AB$
are $\leq1$ and the claim the follows from (i) in the proposition above. (ii)
In this case all the eigenvalues of $AB$ are equal to one hence (ii) in the
proof of Proposition \ref{prophardy} applies.
\end{proof}

\section{Quantum Blobs and Polar Duality\label{sec2}}

The symplectic group $\operatorname*{Sp}(n)$ is the group of all linear
automorphisms of $\mathbb{R}^{2n}=T^{\ast}\mathbb{R}^{n}$ preserving the
symplectic form $\sigma$. in coordinates: $S\in\operatorname*{Sp}(n)$ if and
only $S^{T}JS=SJS^{T}=J$. A block-matrix
\[
S=%
\begin{pmatrix}
A & B\\
C & D
\end{pmatrix}
\]
is symplectic if and only if
\begin{align}
A^{T}C\text{, }B^{T}D\text{ \ symmetric, and }A^{T}D-C^{T}B  &
=I\label{cond1}\\
AB^{T}\text{, }CD^{T}\text{ \ symmetric, and }AD^{T}-BC^{T}  &  =I\text{.}
\label{cond2}%
\end{align}
It follows from the second of these sets of conditions that the inverse of $S$
is then
\begin{equation}
S^{-1}=%
\begin{pmatrix}
D^{T} & -B^{T}\\
-C^{T} & A^{T}%
\end{pmatrix}
\text{.} \label{cond3}%
\end{equation}
The symplectic group is generate by $J$ and the automorphisms
\[
M_{L}=%
\begin{pmatrix}
L^{-1} & 0\\
0 & L^{T}%
\end{pmatrix}
\text{ \ , \ }V_{-P}=%
\begin{pmatrix}
I & 0\\
P & I
\end{pmatrix}
\]
where $L\in GL(n,\mathbb{R})$) and $P\in\operatorname*{Sym}(n,\mathbb{R})$.

\subsection{Quantum blobs}

We introduced the notion of quantum blob in \cite{blobs} as a minimum
uncertainty phase space cell compatible with the action of $\operatorname*{Sp}%
(n)$. The concept was further developed in, for instance \cite{goluPR}. \ By
definition, A (centered) quantum blob is a symplectic ball with radius
$\sqrt{\hbar}$ that is an ellipsoid of $\mathbb{R}^{2n}$ of the type
\begin{equation}
\operatorname*{QB}(S)=S(B^{2n}(\sqrt{\hbar})) \label{blob}%
\end{equation}
where $S\in\operatorname*{Sp}(n)$. Quantum blobs with arbitrary center $z_{0}$
are defined by translating $\operatorname*{QB}(S)$. we will only consider the
centered case.

The volume of a quantum blob quickly decreases with the dimension of phase
space: since symplectic automorphisms are volume-preserving we have%
\begin{equation}
\operatorname*{Vol}(\operatorname*{QB}(S))=\operatorname*{Vol}(B^{2n}%
(\sqrt{\hbar}))=\frac{1}{n!}\left(  \frac{h}{2}\right)  ^{n}. \label{volblob}%
\end{equation}

It turns out that quantum blobs can be obtained using just two simple types of
symplectic transformations. Recall the pre-Iwasawa decomposition of a
symplectic matrix \cite{iwa,Birk,Houde}%
\[
S=%
\begin{pmatrix}
A & B\\
C & D
\end{pmatrix}
.
\]
There exist unique matrices $P,L\in\operatorname*{Sym}(n,\mathbb{R})$, $L=>0$,
and $R\in\operatorname*{Sp}(n)\cap O(2n)$ such that
\begin{equation}
S=V_{-P}M_{L}R. \label{iwa1}%
\end{equation}
These matrices are explicitly given by the formulas%
\begin{gather}
L=(AA^{T}+BB^{T})^{-1/2}\label{pl1}\\
P=(CA^{T}+DB^{T})(AA^{T}+BB^{T})^{-1}. \label{pl2}%
\end{gather}
The matrix $R$ is a symplectic rotation: writing
\[
R=%
\begin{pmatrix}
E & F\\
-F & E
\end{pmatrix}
\in\operatorname*{Sp}(n)\cap O(2n,\mathbb{R})
\]
the $n\times n$ blocks $E$ and $F$ are given by%
\begin{equation}
E=(AA^{T}+BB^{T})^{-1/2}A\text{ \ },\text{ \ }F=(AA^{T}+BB^{T})^{-1/2}B.
\label{unixy}%
\end{equation}
It follows from the pre-Iwasawa factorization result that:

\begin{proposition}
\label{lemqb}Every quantum blob $\operatorname*{QB}(S)$ \ can be written
uniquely as
\begin{equation}
\operatorname*{QB}(S)=V_{-P}M_{L}(B^{2n}(\sqrt{\hbar})) \label{blobml}%
\end{equation}
where $L$ and $P$ are defined using formulas (\ref{pl1}) and (\ref{pl2}) above.
\end{proposition}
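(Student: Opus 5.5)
Existence comes straight from the pre-Iwasawa factorization (\ref{iwa1}). Write $S=V_{-P}M_{L}R$ with $P,L$ given by (\ref{pl1})--(\ref{pl2}) and $R\in\operatorname*{Sp}(n)\cap O(2n,\mathbb{R})$. Since $R$ is orthogonal it maps the Euclidean ball $B^{2n}(\sqrt{\hbar})$ onto itself, so $R(B^{2n}(\sqrt{\hbar}))=B^{2n}(\sqrt{\hbar})$. Hence
\[
\operatorname*{QB}(S)=V_{-P}M_{L}R(B^{2n}(\sqrt{\hbar}))=V_{-P}M_{L}(B^{2n}(\sqrt{\hbar})),
\]
which is (\ref{blobml}).

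For uniqueness, suppose $V_{-P}M_{L}(B^{2n}(\sqrt{\hbar}))=V_{-P'}M_{L'}(B^{2n}(\sqrt{\hbar}))$ with $P,P'$ symmetric and $L,L'$ symmetric positive definite. Set $T=(V_{-P'}M_{L'})^{-1}V_{-P}M_{L}$. Then $T\in\operatorname*{Sp}(n)$ and $T$ maps $B^{2n}(\sqrt{\hbar})$ onto itself. A linear map preserving a centered Euclidean ball preserves the Euclidean norm, so $T\in O(2n)$, and therefore $T\in\operatorname*{Sp}(n)\cap O(2n)$. We then get $V_{-P}M_{L}=V_{-P'}M_{L'}T$, and the uniqueness part of the pre-Iwasawa decomposition forces $P=P'$, $L=L'$ and $T=I$.

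If one prefers not to lean on the uniqueness clause of (\ref{iwa1}), which is the only delicate point, the parameters can be read off directly from the ellipsoid. The set $\operatorname*{QB}(S)=\{z:Mz\cdot z\leq\hbar\}$ is determined by $M=(SS^{T})^{-1}$. A direct computation with $V_{-P}M_{L}$ gives
\[
V_{-P}M_{L}(V_{-P}M_{L})^{T}=\begin{pmatrix}L^{-2} & L^{-2}P\\ PL^{-2} & PL^{-2}P+L^{2}\end{pmatrix}.
\]
Its upper-left block determines $L^{-2}$, and hence $L$, because $L>0$ has a unique positive square root. The off-diagonal block then determines $P=L^{2}(L^{-2}P)$. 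Comparing with $SS^{T}=\begin{pmatrix}AA^{T}+BB^{T} & AC^{T}+BD^{T}\\ \cdot & \cdot\end{pmatrix}$ recovers exactly (\ref{pl1}) and (\ref{pl2}). This uses the fact that an ellipsoid determines its defining matrix.

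The main step to handle carefully is this uniqueness: one has to check that an ellipsoid determines $SS^{T}$, and hence $L$ and $P$, even though $S$ itself is only determined up to a symplectic rotation. The block computation above settles it.
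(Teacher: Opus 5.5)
Your proof is correct. Existence is exactly the paper's argument: the pre-Iwasawa factorization (\ref{iwa1}) together with the rotation invariance of $B^{2n}(\sqrt{\hbar})$.

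For uniqueness your route is more complete than the paper's. The paper only observes that $V_{-P}M_{L}=V_{-P'}M_{L'}$ forces a lower-triangular unipotent matrix to equal a block-diagonal one, hence $P=P'$ and $L=L'$. That shows the \emph{matrix} $V_{-P}M_L$ determines $(P,L)$. It never justifies passing from equality of the \emph{sets} $V_{-P}M_L(B^{2n}(\sqrt{\hbar}))=V_{-P'}M_{L'}(B^{2n}(\sqrt{\hbar}))$ to equality of the matrices. That step is the real content, because the two matrices may a priori differ by a right factor in $\operatorname*{Sp}(n)\cap O(2n,\mathbb{R})$.

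Your argument closes exactly this step. You note that $T=(V_{-P'}M_{L'})^{-1}V_{-P}M_L$ maps the ball onto itself and is therefore orthogonal, and you then apply the uniqueness clause of (\ref{iwa1}). Your alternative via $SS^{T}$ also closes it and has two further advantages. It does not rely on the quoted uniqueness of the pre-Iwasawa decomposition. It also shows that (\ref{pl1})--(\ref{pl2}) depend on $S$ only through $SS^{T}=M^{-1}$, so $L$ and $P$ are functions of the blob rather than of the chosen $S$, a point the paper leaves implicit.

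Two small justifications are worth adding:
\begin{itemize}
\item For the orthogonality of $T$: both $T$ and $T^{-1}$ map the ball into itself, so both have operator norm $\le 1$, which forces $|Tz|=|z|$.
\item A centered ellipsoid $\{z:Mz\cdot z\le\hbar\}$ determines $M$: its gauge function is $(Mz\cdot z/\hbar)^{1/2}$, and a quadratic form determines its symmetric matrix.
\end{itemize}
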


\begin{proof}
It is as straightforward consequence of the factorization (\ref{iwa1}) since
$B^{2n}(\sqrt{\hbar})$ is invariant under rotations. The uniqueness follows
from the observation that if $V_{-P}M_{L}=V_{-P^{\prime}}^{\prime}%
M_{L^{\prime}}$ then $V_{P^{\prime}-P}=M_{(L^{\prime})^{-1}L}$ and hence
$P=P^{\prime}$ and $L=L^{\prime}$.
\end{proof}

\subsection{Identifying quantum blobs and generalized Gaussians}

Let us briefly review the main properties of the Wigner function (or
transform) \cite{WIGNER,Folland}. By definition, the Wigner function of
$\psi\in L^{2}(\mathbb{R}^{n})$ is the real function $W\psi\in L^{2}%
(\mathbb{R}^{2n})$ \ defined by
\begin{equation}
W\psi(z)=\left(  \tfrac{1}{2\pi\hbar}\right)  ^{n}\int_{\mathbb{R}^{n}%
}e^{-\frac{i}{\hbar}py}\psi(x+\tfrac{1}{2}y)\overline{\psi(x-\tfrac{1}{2}%
y)}dy~. \label{wigner}%
\end{equation}
\ \ \ In the case
\[
\psi\in L^{1}(\mathbb{R}^{n})\cap L^{2}(\mathbb{R}^{n})\text{ \ },\text{
\ }||\psi||_{L^{2}}=1
\]
the function $W\psi$ is a quasi-probability function, in the sense that
\[
\int_{\mathbb{R}^{n}}W\psi(x,p)dp=|\psi(x)|^{2}\text{ \ },\text{ \ }%
\int_{\mathbb{R}^{n}}W\psi(x,p)dx=|\widehat{\psi}(x)|^{2}%
\]
where $\widehat{\psi}$ is the Fourier transform
\[
\widehat{\psi}(p)=\left(  \tfrac{1}{2\pi\hbar}\right)  ^{n/2}\int%
_{\mathbb{R}^{n}}e^{-\frac{i}{\hbar}px}\psi(x)dx.
\]
In particular,%
\[
\int_{\mathbb{R}^{2n}}W\psi(x,p)dpdx=1.
\]
We also recall \cite{Birk,WIGNER} the symplectic covariance formula
\begin{equation}
W(\widehat{S}^{-1}\psi)=W\psi\circ S \label{sycowig}%
\end{equation}
for all metaplectic operators $\widehat{S}$ with projection $S\in
\operatorname*{Sp}(n).$

We consider now generalized centered Gaussian functions
\begin{equation}
\psi_{W,Y}(x)=\left(  \tfrac{\det W}{(\pi\hbar)^{n}}\right)  ^{1/4}%
e^{-\frac{1}{2\hbar}(W+iY)x\cdot x} \label{squeezed}%
\end{equation}
where $W,Y\in\operatorname*{Sym}(n)$ ,and $>0$. These Gaussians can be
obtained from the standard Gaussian
\begin{equation}
\phi_{0}(x)=\psi_{I,0}(x)=(\pi\hbar)^{-n/4}e^{-|x|^{2}/2\hbar}
\label{standard}%
\end{equation}
using the elementary metaplectic operators \cite{Birk,Birkbis}%
\begin{gather}
\widehat{J}\psi(x)=\left(  \tfrac{1}{2\pi i\hbar}\right)  ^{n/2}%
\int_{\mathbb{R}^{n}}e^{-\frac{i}{\hbar}x\cdot x^{\prime}}\psi(x^{\prime
})dx^{\prime}.\label{mp1}\\
\widehat{M}_{L,m}\psi(x)=i^{m}\sqrt{|\det L|}\psi(Lx)\text{\ },\text{ \ }%
m\pi=\arg\det L,\label{mp2}\\
\widehat{V}_{P}\psi(x)=e^{-\frac{i}{2\hbar}Px\cdot x}\psi(x)\text{ }
\label{mp3}%
\end{gather}
covering $J,M_{L},V_{P}$, respectively. This follows from the elementary
observation that
\begin{equation}
\psi_{W,Y}=\widehat{V}_{Y}\widehat{M}_{W^{1/2}}\phi_{0} \label{,pfio}%
\end{equation}
where $\phi_{0}$ is the standard Gaussian. Note that the Wigner function of
$\psi_{W,Y}=V_{Y}M_{W^{-1/2}}\phi_{0}$ is given by the well-known
\cite{Birk,Birkbis,Folland} formula, which immediately follows from our
constructions above:%
\begin{equation}
W\psi_{W,Y}(z)=(\pi\hbar)^{-n}e^{-\frac{1}{\hbar}Gz\cdot z} \label{wigpsi}%
\end{equation}
where $G$ is the positive definite symplectic matrix
\begin{equation}
G=%
\begin{pmatrix}
W+YW^{-1}Y & YW^{-1}\\
W^{-1}Y & W^{-1}%
\end{pmatrix}
. \label{G}%
\end{equation}

\begin{theorem}
\label{ThmBlob}Let $\operatorname*{Blob}\nolimits_{0}(n)$ be the set of all
quantum blobs $\operatorname*{QB}(S)$ and $\operatorname*{Gauss}%
\nolimits_{0}(n)$ the set of all generalized Gaussians $\psi_{W,Y}$. The
mapping
\begin{equation}
\Gamma:\operatorname*{Blob}\nolimits_{0}(n)\longrightarrow
\operatorname*{Gauss}\nolimits_{0}(n) \label{ggablob}%
\end{equation}
defined by
\begin{equation}
\Gamma:V_{Y}M_{W^{-1/2}}B^{2n}(\sqrt{\hbar}))\longmapsto\widehat{V}%
_{Y}\widehat{M}_{W^{-1/2}W,Y}\phi_{0} \label{gablobis}%
\end{equation}
is a bijection. Explicitly:%
\begin{equation}
\Gamma(\operatorname*{QB}(V_{Y}M_{W^{-1/2}}))(z)=(\pi\hbar)^{-n/4}\det
W^{-1/2}e^{-\frac{1}{2\hbar}(W+iY)x\cdot x.}. \label{explicit}%
\end{equation}

\end{theorem}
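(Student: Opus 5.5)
The plan is to build $\Gamma$ from the pre-Iwasawa parametrization of quantum blobs, check that it is well defined, and then prove surjectivity and injectivity separately.

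\emph{Well-definedness.} By Proposition \ref{lemqb}, every quantum blob can be written uniquely as $\operatorname*{QB}(S)=V_{-P}M_{L}(B^{2n}(\sqrt{\hbar}))$ with $L>0$ and $P$ symmetric, because the rotational factor $R$ is absorbed by the ball. Setting $Y=-P$ and $W^{-1/2}=L$, with $W>0$ uniquely determined by $L$, every blob therefore has a unique label $(W,Y)\in\operatorname*{Sym}_{+}(n)\times\operatorname*{Sym}(n)$. So the formula (\ref{gablobis}) depends only on the blob and not on the choice of $S$. This is the point where Proposition \ref{lemqb} is essential, since $S$ itself is determined only up to $\operatorname*{Sp}(n)\cap O(2n)$.

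\emph{Identification of the image.} I will check that $\widehat{V}_{Y}\widehat{M}_{W^{1/2}}\phi_{0}$ equals the generalized Gaussian $\psi_{W,Y}$, using the explicit actions (\ref{mp2}) and (\ref{mp3}); this is relation (\ref{,pfio}). Dilation by $W^{1/2}$ gives $(\det W)^{1/4}(\pi\hbar)^{-n/4}e^{-Wx\cdot x/2\hbar}$, and the chirp then multiplies by $e^{-iYx\cdot x/2\hbar}$. One must keep track of the convention mismatch between $M_{W^{-1/2}}$, acting on phase space, and $\widehat{M}_{W^{1/2}}$, acting on functions, as well as the sign convention $V_{Y}$ versus $V_{-P}$. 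I would fix these by requiring that the Wigner function (\ref{wigpsi}) satisfies
\[
W\psi_{W,Y}(z)=(\pi\hbar)^{-n}e^{-\frac{1}{\hbar}Gz\cdot z},
\]
where the set $\{Gz\cdot z\le\hbar\}$ is exactly $(V_{Y}M_{W^{-1/2}})(B^{2n}(\sqrt\hbar))$. Since $G=(S^{-1})^{T}S^{-1}$ for $S=V_{Y}M_{W^{-1/2}}$, by the symplectic covariance (\ref{sycowig}), a direct block multiplication should reproduce (\ref{G}). This reconciliation of signs and conventions, and of the prefactor in (\ref{explicit}), is the main obstacle. It is bookkeeping, but it determines which blob corresponds to which Gaussian.

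\emph{Bijectivity.} Surjectivity follows because every $\psi_{W,Y}$ with $W>0$ and $Y$ symmetric arises from the blob labelled $(W,Y)$. For injectivity, if $\psi_{W,Y}=\psi_{W',Y'}$, then comparing the real and imaginary parts of the quadratic exponents (equivalently, comparing the logarithms of $|\psi|$ and the phases) gives $W=W'$ and $Y=Y'$. The uniqueness statement in Proposition \ref{lemqb} then shows that the two blobs coincide.

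A cleaner equivalent route to injectivity is to note that the blob is the level set $\{W\psi\ge(\pi\hbar)^{-n}e^{-1}\}$ of the Wigner function. This exhibits $\Gamma^{-1}$ explicitly: the blob is recovered from the Gaussian through its Wigner function.
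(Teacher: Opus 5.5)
Your argument is correct, but your main injectivity step differs from the paper's.

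\textbf{The paper's route.} The paper works in the metaplectic group. From $\widehat{S}_{W,Y}\phi_0=\widehat{S}_{W',Y'}\phi_0$ it appeals to the rotational symmetry of $\phi_0$ to get $S_{W',Y'}=S_{W,Y}R$ with $R\in\operatorname{Sp}(n)\cap O(2n,\mathbb{R})$. It then concludes because $R$ fixes $B^{2n}(\sqrt{\hbar})$.

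\textbf{Your route.} You factor $\Gamma$ through the labels. Blobs correspond bijectively to pairs $(W,Y)$ by the uniqueness in Proposition~\ref{lemqb}. The map $(W,Y)\mapsto\psi_{W,Y}$ is injective because you can read $W$ off $|\psi|$ and $Y$ off the phase. For the phase, add the one-line remark that $(Y-Y')x\cdot x$ is continuous, lies in $4\pi\hbar\mathbb{Z}$, and vanishes at $0$, hence is identically zero.

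\textbf{What each buys.} Your route is elementary. It also makes explicit the uniqueness that well-definedness requires, which the paper passes over by citing only the existence of the factorization. The paper's route avoids the normal form and shows that $\widehat{S}\phi_0$ determines $S(B^{2n}(\sqrt{\hbar}))$ for any $S$. However, it leaves unproved that the stabilizer of $\phi_0$ consists of lifts of symplectic rotations. Your Wigner level-set remark supplies exactly that step: equal Wigner functions give $|S^{-1}z|=|S'^{-1}z|$ for all $z$, so $S'^{-1}S$ is orthogonal.

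\textbf{Conventions.} Your suspicion about conventions is justified. With the paper's $M_L=\operatorname{diag}(L^{-1},L^{T})$ and \eqref{mp2}, the product $(S^{-1})^{T}S^{-1}$ reproduces \eqref{G} for $S=V_YM_{W^{1/2}}$, not for $V_YM_{W^{-1/2}}$. The prefactor in \eqref{explicit} should also be $(\det W)^{1/4}(\pi\hbar)^{-n/4}$. So the Wigner-consistent label is $L=W^{1/2}$, not the $L=W^{-1/2}$ you wrote. Since $W\mapsto W^{-1}$ is a bijection of positive definite matrices, the bijectivity of $\Gamma$ is unaffected.
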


\begin{proof}
Set $S_{W,Y}=V_{Y}M_{W^{-1/2}}$. Let us thus show that
\begin{gather*}
\Gamma:\operatorname*{Blob}\nolimits_{0}(n)\longrightarrow
\operatorname*{Gauss}\nolimits_{0}(n)\\
S_{W,Y}B^{2n}(\sqrt{\hbar})\longmapsto\widehat{S}_{W,Y}\phi_{0}%
\end{gather*}
is a bijection. Firstly, $\Gamma$ is a well-defined mapping since every
quantum blob can be written using the pre-Iwasawa factorization as
(\ref{iwa1}) as%
\[
\operatorname*{QB}(S_{W,Y})=S_{W,Y}(B^{2n}(\sqrt{\hbar})=V_{Y}M_{W^{-1/2}%
}(B^{2n}(\sqrt{\hbar})).
\]
Similarly every $\psi_{W,Y}$ can be written as $\psi_{W,Y}=\widehat{S}%
_{W,Y}\phi_{0}$, showing at the same time that $\Gamma$ is surjective. To show
that $\Gamma$ is \ bijection there remains to \ prove injectivity, that is if
$\widehat{S}_{W,Y}\phi_{0}=\widehat{S}_{W^{\prime},Y^{\prime}}^{\prime}%
\phi_{0}$ then
\[
S_{W,Y}B^{2n}(\sqrt{\hbar}))=S_{W^{\prime},Y^{\prime}}B^{2n}(\sqrt{\hbar})).
\]
In view of the rotational symmetry of the standard Gaussian $\phi_{0}$ we must
have $\widehat{S}_{W,Y}=\widehat{S}_{W^{\prime},Y^{\prime}}\widehat{R}$ where
$\widehat{R}\in\operatorname*{Mp}(n)$ \ covers a symplectic rotation
$R\in\operatorname*{Sp}(n)\cap O(2n,\mathbb{R})$, hence $S_{W^{\prime
},Y^{\prime}}=S_{W,Y}R$ and the injectivity follows since $R(B^{2n}%
(\sqrt{\hbar})))=B^{2n}(\sqrt{\hbar}))$.
\end{proof}

\section{Projecting Quantum Blobs\label{sec3}}

This section connects quantum polar parity to quantum blobs: in addition to
being minimal uncertainty phase space cells, the orthogonal projections on
$\mathbb{R}_{x}^{n}$ and $\mathbb{R}_{p}^{n}$ are polar dual pairs.

\subsection{The projection theorem}

The following result is central to this work; it will show that quantum
polarity is directly related to the uncertainty principle:

\begin{theorem}
\label{Thm1}(i) The orthogonal projections $X$ and $P$ of a quantum blob
$\operatorname*{QB}(S)$ on the $x$ and $p$ spaces form a quantum polar pair;
i.e:%
\begin{equation}
\left(  \Pi_{X}\operatorname*{QB}(S)\right)  ^{\hbar}\subset\Pi_{P}%
\operatorname*{QB}(S). \label{pixp1}%
\end{equation}
(ii) The equality
\begin{equation}
\left(  \Pi_{X}\operatorname*{QB}(S)\right)  ^{\hbar}=\Pi_{P}%
\operatorname*{QB}(S) \label{ppicpus}%
\end{equation}
occurs if and only $S=M_{L}$ for some $L\in GL(n,\mathbb{R})$.
\end{theorem}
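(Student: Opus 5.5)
The approach is to turn everything into matrix statements about the covariance-type matrix of the blob. Set $\Sigma=SS^{T}$, which is symmetric, positive definite and symplectic. Since $\operatorname*{QB}(S)=S(B^{2n}(\sqrt{\hbar}))=\{z:\Sigma^{-1}z\cdot z\leq\hbar\}$, we write $\Sigma$ in $n\times n$ blocks
\[
\Sigma=\begin{pmatrix}\Sigma_{XX} & \Sigma_{XP}\\ \Sigma_{PX} & \Sigma_{PP}\end{pmatrix},\quad \Sigma_{XX}=AA^{T}+BB^{T},\ \Sigma_{XP}=AC^{T}+BD^{T},\ \Sigma_{PP}=CC^{T}+DD^{T}.
\]

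\textbf{Step 1: compute the projections.} The first step is the standard fact about orthogonal projections of ellipsoids $\{z:\Sigma^{-1}z\cdot z\leq\hbar\}$: the projection onto a coordinate subspace is obtained by inverting the corresponding diagonal block of $\Sigma$, not of $\Sigma^{-1}$. This follows by minimizing $\Sigma^{-1}z\cdot z$ over the complementary variable, or by a support-function argument. Hence
\[
X=\Pi_{X}\operatorname*{QB}(S)=\{x:\Sigma_{XX}^{-1}x\cdot x\leq\hbar\},\qquad P=\Pi_{P}\operatorname*{QB}(S)=\{p:\Sigma_{PP}^{-1}p\cdot p\leq\hbar\}.
\]
By formula (\ref{dualellh}) of the Lemma, $X^{\hbar}=\{p:\Sigma_{XX}p\cdot p\leq\hbar\}$. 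The inclusion $X^{\hbar}\subset P$ is therefore equivalent to the L\"{o}wner inequality $\Sigma_{PP}^{-1}\leq\Sigma_{XX}$. Equality of the two sets is equivalent to $\Sigma_{PP}^{-1}=\Sigma_{XX}$.

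\textbf{Step 2: use that $\Sigma$ is symplectic.} This is the key step. Because $\Sigma\in\operatorname*{Sp}(n)$, we have $\Sigma^{-1}=J^{T}\Sigma J$, so the upper-left block of $\Sigma^{-1}$ equals $\Sigma_{PP}$. On the other hand, the block-inversion (Schur complement) formula gives the upper-left block of $\Sigma^{-1}$ as $(\Sigma_{XX}-\Sigma_{XP}\Sigma_{PP}^{-1}\Sigma_{PX})^{-1}$. Equating the two expressions yields
\[
\Sigma_{XX}-\Sigma_{PP}^{-1}=\Sigma_{XP}\Sigma_{PP}^{-1}\Sigma_{PX}\geq0,
\]
since $\Sigma_{PX}=\Sigma_{XP}^{T}$ and $\Sigma_{PP}>0$. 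This proves (i).

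\textbf{Step 3: the equality case (ii).} Equality holds iff $\Sigma_{XP}\Sigma_{PP}^{-1}\Sigma_{XP}^{T}=0$, which, since $\Sigma_{PP}^{-1}>0$, happens iff $\Sigma_{XP}=AC^{T}+BD^{T}=0$. By the pre-Iwasawa formula (\ref{pl2}) this is exactly the condition $P=0$ in the factorization (\ref{iwa1}). So $S=M_{L}R$ with $R$ a symplectic rotation, and by Proposition \ref{lemqb} we get $\operatorname*{QB}(S)=\operatorname*{QB}(M_{L})$. Conversely, if $S=M_{L}$ then $\Sigma=\operatorname{diag}(L^{-1}L^{-T},L^{T}L)$, so $\Sigma_{XP}=0$ and equality holds.

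The main obstacle is interpreting (ii) correctly. Literally $S=M_{L}$ cannot be forced, because only the blob $\operatorname*{QB}(S)$ is determined, and $S$ only up to right multiplication by a symplectic rotation. I would therefore state and prove (ii) as ``$\operatorname*{QB}(S)=\operatorname*{QB}(M_{L})$ for some $L$'', equivalently $S=M_{L}R$ with $R\in\operatorname*{Sp}(n)\cap O(2n)$, which is what Proposition \ref{lemqb} delivers. The only other point needing care is a brief justification of the projection formula in Step 1.
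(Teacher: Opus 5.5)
Your proposal is correct and follows the paper's argument, with one change of variable: you work with $\Sigma=SS^{T}$ (also symplectic) in place of the paper's $M=(SS^{T})^{-1}$. In both, the projections are read off from diagonal blocks or Schur complements, and then the symplectic identity $\Sigma^{-1}=J^{T}\Sigma J$ combined with block inversion gives $\Sigma_{XX}-\Sigma_{PP}^{-1}=\Sigma_{XP}\Sigma_{PP}^{-1}\Sigma_{PX}\geq0$. Your treatment of (ii) is more complete than the paper's, which stops at $M_{XP}=0$: identifying this with $P=0$ in the pre-Iwasawa factorization and stating the conclusion as $S=M_{L}R$ with $R$ a symplectic rotation (equivalently $\operatorname*{QB}(S)=\operatorname*{QB}(M_{L})$) is the right correction, since for example $S=J$ gives equality without being of the form $M_{L}$.
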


\begin{proof}
The ellipsoid $\operatorname*{QB}(S)=S(B^{2n}(\sqrt{\hbar}))$ consists of all
$z\in\mathbb{R}^{2n}$ such that $Mz\cdot z\leq\hbar$ where
\[
M=(SS^{T})^{-1}\in\operatorname*{Sp}(n).
\]
The matrix $M$ is symmetric and positive definite; we write it as above in
block-matrix form
\[
M=%
\begin{pmatrix}
M_{XX} & M_{XP}\\
M_{XP}^{T} & M_{PP}%
\end{pmatrix}
\]
\ with $M_{XX}>0$, $M_{PP}>0$. Since $M\in\operatorname*{Sp}(n)$ and $M=M^{T}$
we have $MJM=J$ which is equivalent to the symplectic relations
\begin{gather}
M_{XX}M_{PP}-M_{XP}^{2}=I_{n\times n}\label{mc1}\\
M_{XX}M_{PX}=M_{XP}M_{XX}\label{mc2}\\
M_{PX}M_{PP}=M_{PP}M_{XP}~. \label{mc3}%
\end{gather}
The projections $X=\Pi_{X}\operatorname*{QB}(S)$ and $P=\Pi_{P}%
\operatorname*{QB}(S)$ of $S(B^{2n}(\sqrt{\hbar}))$ are given by
\begin{align}
X  &  =\{x:(M/M_{PP})x\cdot x\leq\hbar\}\label{schurco1}\\
\text{ \ }P  &  =\{p:(M/M_{XX})p\cdot p\leq\hbar\} \label{schurco2}%
\end{align}
where $M/M_{PP}$ and $M/M_{XX}$ are the Schur complements%
\begin{align}
M/M_{PP}  &  =M_{XX}-M_{XP}M_{PP}^{-1}M_{PX}\label{schur1}\\
M/M_{XX}  &  =M_{PP}-M_{PX}M_{XX}^{-1}M_{XP} \label{schur2}%
\end{align}
are the Schur complements; we have $M/M_{PP}>0$ and $M/M_{XX}>0$. It follows
that we have
\[
X^{\hbar}=\{p:(M/M_{PP})^{-1}p\cdot p\leq\hbar\}
\]
hence the condition $X^{\hbar}\subset P$ is equivalent to%
\begin{equation}
(M/M_{PP})^{-1}\geq M/M_{XX}~. \label{RPPXX}%
\end{equation}
Since $M$ is symplectic and symmetric we have%
\[
M=^{-1}-JMJ=%
\begin{pmatrix}
M_{PP} & -M_{XP}^{T}\\
-M_{XP} & M_{XX}%
\end{pmatrix}
\]
and also, using the block inversion formula,
\[
M^{-1}=%
\begin{pmatrix}
\left(  M_{XX}-M_{XP}M_{PP}^{-1}M_{PX}\right)  ^{-1} & \ast\\
\ast & \left(  M_{PP}-M_{PX}M_{XX}^{-1}M_{XP}\right)  ^{-1}%
\end{pmatrix}
\]
and hence, comparing both formulas,%
\begin{align}
M/M_{PP}  &  =M_{XX}-M_{XP}M_{PP}^{-1}M_{PX}=M_{PP}^{-1}\label{MX}\\
M/M_{XX}  &  =M_{PP}-M_{PX}M_{XX}^{-1}M_{XP}=M_{XX}^{-1} \label{MP}%
\end{align}
and therefore
\begin{equation}
X^{\hbar}=\{p:M_{PP}p\cdot p\leq\hbar\}\text{ \ },\text{\ \ }P=\{p:(M_{XX}%
^{-1}p\cdot p\leq\hbar\} \label{MPX}%
\end{equation}
so the condition $X^{\hbar}\subset P$ is equivalent to $M_{PP}^{-1}\leq
M_{XX}$, and the latter holds follows from the inequality
\[
M_{PP}^{-1}=M_{XX}-M_{XP}M_{PP}^{-1}M_{PX}\leq M_{XX}.
\]
Hence equality $M_{PP}^{-1}=M_{XX}$ holds if and only if \ $M_{XP}=0$ in which
case $X^{\hbar}=P.$
\end{proof}

The orthogonal projections of a quantum blob do not define it uniquely. This
is closely related to Pauli's famous reconstruction problem, which we briefly
discuss below.

\subsection{Analogy with Pauli's reconstruction problem}

Pauli's reconstruction problem \cite{Pauli} arises from the following
question: "Does the knowledge of $|\psi(x)|$ and $|\widehat{\psi}(p)|$ of the
Gaussian
\[
\psi_{W,Y}(x)=\left(  \tfrac{\det W}{(\pi\hbar)^{n}}\right)  ^{1/4}%
e^{-\frac{1}{2\hbar}(W+iY)x\cdot x}%
\]
allow one to reconstruct this function?" As is well-known, the answer is "no".
Consider the case of a generalized Gaussian $\psi_{W,Y}$; it has covariance
matrix%
\begin{equation}%
\begin{pmatrix}
\Sigma_{XX} & \Sigma_{XP}\\
\Sigma_{XP}^{T} & \Sigma_{PP}%
\end{pmatrix}
=\dfrac{\hbar}{2}G^{-1} \label{covG}%
\end{equation}
where $G=%
\begin{pmatrix}
G_{XX} & G_{XP}\\
G_{XP}^{T} & G_{PP}%
\end{pmatrix}
$ i the symplectic matrix (see \ref{G}), whose inverse is
\[
G^{-1}=%
\begin{pmatrix}
W & -W^{-1}Y\\
-YW^{-1} & ^{-1}W+YW^{-1}Y
\end{pmatrix}
.
\]
Since $G^{-1}$ is symplectic we must have%
\begin{equation}
\Sigma_{XX}\Sigma_{PP}-(\Sigma_{XP})^{2}=\frac{\hbar^{2}}{4}I_{n\times n}
\label{condsig}%
\end{equation}
(which is a matrix form of the Arborists inequalities). Now, a straightforward
computation shows that we have%
\begin{align}
|\psi_{W,Y}(x)|^{2}  &  =\left(  \tfrac{1}{2\pi}\right)  ^{n/2}(\det
\Sigma_{XX})^{-1/2}\exp\left(  -\frac{1}{2}\Sigma_{XX}^{-1}x\cdot x\right)
\label{margauss1}\\
|\widehat{\psi}_{W,Y}(p)|^{2}  &  =\left(  \tfrac{1}{2\pi}\right)  ^{n/2}%
(\det\Sigma_{PP})^{-1/2}\exp\left(  -\frac{1}{2}\Sigma_{PP}^{-1}p\cdot
p\right)  \label{margauss2}%
\end{align}
and these relations do not determine uniquely $\Sigma_{XP}$; in fact
(\ref{condsig}) leads to many solutions and hence to many functions
$\psi_{W,Y}$ (called by Corbett \cite{Corbett} "Pauli partners").

In fact, $_{W,Y}($ is associated, by the bijection $\Gamma$ described in
Theorem \ref{ThmBlob} to the quantum blob $\{z:Gz\cdot z\leq\hbar\}$; in view
of formulas (\ref{schurco1}), (\ref{schurco2}), and (\ref{MPX}) the orthogonal
projection of this quantum blob are
\begin{align}
X  &  =\{x:G_{PP}x\cdot x\leq\hbar\}=\{p:W^{-1}x\cdot x\leq\hbar\}\text{ \ }\\
\text{\ \ }P  &  =\{p:(G_{XX}^{-1}p\cdot p\leq\hbar\})
\end{align}
leading to the equalities
\[
\Sigma_{XX}=W^{-1}\text{ \ , \ }\Sigma_{PP}=W+YW^{-1}Y.
\]

\section{John Ellipsoids and Quantum Polar Duality\label{sec4}}

The John ellipsoid $\operatorname{John}(K)$ of a convex set $K$ in an
Euclidean space is the unique ellipsoid with largest volume contained in that
set \cite{Ball}. It is equivariant under both translations:
$\operatorname{John}(K+z)=\operatorname{John}(K)+z$ and linear automorphisms
$A$: $A\operatorname{John}(K)=\operatorname{John}(AK)$.

\subsection{The saturated case of $X\times X^{\hbar}$}

Here we study the problem of which covariance matrices can be reconstructed
from quantum polar pairs $(X,P)$,$X^{\hbar}\subset P$. \ We write, , as usual,%
\begin{align*}
X  &  =\{p:Ax\cdot x\leq\hbar\}=A^{-1/2}(B_{P}^{n}(\sqrt{\hbar})\\
X^{\hbar}  &  =\{p:A^{-1}p\cdot p\leq\hbar\}=A^{1/2}(B_{P}^{n}(\sqrt{\hbar
}))\\
P  &  =\{p:Bp\cdot p\leq\hbar\}=B^{-1/2}(B_{P}^{n}(\sqrt{\hbar})).
\end{align*}
We begin with the "saturated" case, which is rather straightforward. For this
purpose we will need the following straightforward result:

\begin{lemma}
\label{LemLoewner}The John ellipsoid of the product $B_{X}^{n}(\sqrt{\hbar
})\times B_{P}^{n}(\sqrt{\hbar})$ is the phase space ball $B^{2n}(\sqrt{\hbar
})$:%
\[
\operatorname{John}(B_{X}^{n}(\sqrt{\hbar})\times B_{P}^{n}(\sqrt{\hbar
}))=B^{2n}(\sqrt{\hbar}).
\]

\end{lemma}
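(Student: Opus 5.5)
The plan is to use the uniqueness of the John ellipsoid together with its linear equivariance, $A\operatorname{John}(K)=\operatorname{John}(AK)$, to reduce the problem to a one-parameter (in fact two-parameter) family of ellipsoids. Then I will maximize the volume over that family by hand. Write $K=B_{X}^{n}(\sqrt{\hbar})\times B_{P}^{n}(\sqrt{\hbar})$.

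\textbf{Step 1 (symmetry reduction).} For $U,V\in O(n)$ the block map $\operatorname{diag}(U,V)$ is orthogonal and maps $K$ onto itself. By equivariance and uniqueness, $\operatorname{John}(K)$ is invariant under every such map. It is also centered at the origin, since $K=-K$. Write the ellipsoid as $\{z:Qz\cdot z\leq1\}$ with
\[
Q=\begin{pmatrix} Q_{XX} & Q_{XP}\\ Q_{XP}^{T} & Q_{PP}\end{pmatrix}.
\]
Invariance gives $U^{T}Q_{XX}U=Q_{XX}$ and $V^{T}Q_{PP}V=Q_{PP}$ for all orthogonal $U,V$, so $Q_{XX}=aI$ and $Q_{PP}=bI$. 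Invariance also gives $U^{T}Q_{XP}V=Q_{XP}$ for all $U,V$; taking $U=I$ and $V=-I$ yields $Q_{XP}=0$. Hence
\[
\operatorname{John}(K)=\{(x,p):|x|^{2}/\alpha^{2}+|p|^{2}/\beta^{2}\leq1\}
\]
for some $\alpha,\beta>0$.

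\textbf{Step 2 (containment and optimization).} Setting $p=0$, the inclusion of this ellipsoid in $K$ forces $\alpha\leq\sqrt{\hbar}$; setting $x=0$ forces $\beta\leq\sqrt{\hbar}$. Conversely, if $\alpha,\beta\leq\sqrt{\hbar}$, every point of the ellipsoid satisfies $|x|\leq\alpha\leq\sqrt{\hbar}$ and $|p|\leq\beta\leq\sqrt{\hbar}$, so it lies in $K$. The volume is proportional to $\alpha^{n}\beta^{n}$, which is maximal exactly at $\alpha=\beta=\sqrt{\hbar}$. That ellipsoid is $\{|x|^{2}+|p|^{2}\leq\hbar\}=B^{2n}(\sqrt{\hbar})$, which proves the lemma.

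\textbf{Main obstacle.} The only delicate point is Step 1: the invariance argument depends on the uniqueness of the John ellipsoid (John's theorem, as in \cite{Ball}), and on checking that the block maps genuinely preserve $K$. Once the off-diagonal block is killed and the diagonal blocks are forced to be scalar, Step 2 is an elementary one-variable maximization.
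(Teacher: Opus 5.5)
Your proof is correct and follows essentially the same strategy as the paper: use symmetries of $K$, together with the equivariance and uniqueness of the John ellipsoid, to pin down its form, then conclude by volume maximality. The only difference is the choice of symmetries: the paper uses the swap $(x,p)\mapsto(p,x)$, the partial reflection $(x,p)\mapsto(-x,p)$ and the simultaneous rotations $(x,p)\mapsto(Hx,Hp)$ to reduce directly to balls $B^{2n}(\lambda^{-1/2})$, whereas your independent block rotations $\operatorname{diag}(U,V)$ leave two radii $\alpha,\beta$ and an elementary maximization of $\alpha^{n}\beta^{n}$ under $\alpha,\beta\leq\sqrt{\hbar}$ (which has the minor advantage of applying verbatim to products of balls of unequal radii).
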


\begin{proof}
The inclusion
\begin{equation}
B^{2n}(\sqrt{\hbar})\subset B_{X}^{n}(\sqrt{\hbar})\times B_{P}^{n}%
(\sqrt{\hbar}) \label{incl}%
\end{equation}
is obvious, and we cannot have
\[
B^{2n}(R)\subset B_{X}^{n}(\sqrt{\hbar})\times B_{P}^{n}(\sqrt{\hbar})
\quad\text{if } R>\sqrt{\hbar}.
\]
Assume now that the John ellipsoid $\Omega_{\mathrm{John}}$ of
\[
\Omega=B_{X}^{n}(\sqrt{\hbar})\times B_{P}^{n}(\sqrt{\hbar}))
\]
is defined by
\[
Ax\cdot x+Bx\cdot p+Cp\cdot p\leq\hbar
\]
where $A,C>0$ and $B$ are real $n\times n$ matrices. Since $\Omega$ is
invariant by the transformation $(x,p)\longmapsto(p,x)$ so is $\Omega
_{\mathrm{John}}$ and we must thus have $A=C$ and $B=B^{T}$. Similarly,
$\Omega$ being invariant by the partial reflection $(x,p)\longmapsto(-x,p)$ we
get $B=0$ so $\Omega_{\mathrm{John}}$ is defined by $Ax\cdot x+Ap\cdot
p\leq\hbar$. The next step is to observe that $\Omega$ and hence
$\Omega_{\mathrm{John}}$ are invariant under all transformations
$(x,p)\longmapsto(Hx,Hp)$ where $H\in O(n,\mathbb{R})$, so we must have
$AH=HA$ for all $H\in O(n,\mathbb{R})$, but this is only possible if
$A=\lambda I_{n\times n}$ for some $\lambda\in\mathbb{R}$. The John ellipsoid
is thus of the type $B^{2n}(\lambda^{-1/2})$ for some $\lambda\geq1$ and this
concludes the proof since $\lambda>1$ is excluded (maximality of the John
ellipsoid forces $\lambda=1$) .
\end{proof}

\begin{theorem}
\label{propblob1}Let $X=\{x:Ax\cdot x\leq\hbar\}$ for some $A=A^{T}>0$. The
product $X\times X^{\hbar}$ contains a unique quantum blob, namely the John
ellipsoid
\begin{equation}
\operatorname{John}(X\times X^{\hbar})=M_{A^{-1/2}}(B^{2n}(\sqrt{\hbar}))
\label{qmab}%
\end{equation}
where $M_{A^{-1/2}}$ is the block-diagonal symplectic matrix%
\begin{equation}
M_{A^{1/2}}=%
\begin{pmatrix}
A^{-1/2} & 0\\
0 & A^{1/2}%
\end{pmatrix}
\label{ma}%
\end{equation}
and we have%
\begin{equation}
\Pi_{X^{\prime}}\Omega=X\ \ \ ,\ \ \Pi_{\ell P}\Omega=X^{\hbar}. \label{piq}%
\end{equation}

\end{theorem}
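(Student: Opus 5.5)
The plan is to reduce everything to the normalized case of two $\sqrt{\hbar}$-balls, using the linear equivariance of John ellipsoids together with Lemma \ref{LemLoewner}.

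\emph{Step 1 (normalization).} Write $X=A^{-1/2}B_X^n(\sqrt{\hbar})$. By the scaling property of polarity and formula (\ref{bhh}),
\[
X^{\hbar}=A^{1/2}B_P^n(\sqrt{\hbar}).
\]
Hence
\[
X\times X^{\hbar}=M_{A^{1/2}}\bigl(B_X^n(\sqrt{\hbar})\times B_P^n(\sqrt{\hbar})\bigr),
\]
where $M_{A^{1/2}}=\operatorname{diag}(A^{-1/2},A^{1/2})$ is block-diagonal. This matrix is symplectic: it is of the form $M_L$ with $L=A^{1/2}$, in the paper's notation, although the statement labels it $M_{A^{-1/2}}$.

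\emph{Step 2 (John ellipsoid).} Since $\operatorname{John}(AK)=A\operatorname{John}(K)$ for linear automorphisms, Lemma \ref{LemLoewner} gives
\[
\operatorname{John}(X\times X^{\hbar})=M_{A^{1/2}}B^{2n}(\sqrt{\hbar}).
\]
Because the matrix is symplectic, this set is a quantum blob.

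\emph{Step 3 (projections).} The projections of $B^{2n}(\sqrt{\hbar})$ onto the $x$- and $p$-spaces are the balls $B_X^n(\sqrt{\hbar})$ and $B_P^n(\sqrt{\hbar})$. Since the matrix is block-diagonal, projection commutes with it. The projections of $\Omega=\operatorname{John}(X\times X^{\hbar})$ are therefore $A^{-1/2}B_X^n(\sqrt{\hbar})=X$ and $A^{1/2}B_P^n(\sqrt{\hbar})=X^{\hbar}$, which is (\ref{piq}). Theorem \ref{Thm1}(ii) confirms this is consistent: the blob is of the form $M_L(B^{2n}(\sqrt{\hbar}))$, so its projections are saturated.

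\emph{Step 4 (uniqueness), the main obstacle.} Let $\operatorname{QB}(S)\subset X\times X^{\hbar}$. Then its projections satisfy $\Pi_X\operatorname{QB}(S)\subset X$ and $\Pi_P\operatorname{QB}(S)\subset X^{\hbar}$. By Theorem \ref{Thm1}(i) and antimonotonicity,
\[
X^{\hbar}\subset(\Pi_X\operatorname{QB}(S))^{\hbar}\subset\Pi_P\operatorname{QB}(S)\subset X^{\hbar}.
\]
So every inclusion is an equality. In particular $\Pi_X\operatorname{QB}(S)=X$ by biduality, and the projected pair is saturated.

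By Theorem \ref{Thm1}(ii), saturation forces the blob to have the form $M_L(B^{2n}(\sqrt{\hbar}))$. In the language of its proof, the cross block $M_{XP}$ vanishes. The blob is then $\{(x,p):M_{XX}x\cdot x+M_{PP}p\cdot p\le\hbar\}$ with $M_{PP}=M_{XX}^{-1}$ by (\ref{MX}). Its $x$-projection is $\{x:M_{XX}x\cdot x\le\hbar\}$, since $M/M_{PP}=M_{XX}$ here. Equating this with $X$ gives $M_{XX}=A$, so the blob is uniquely $M_{A^{1/2}}B^{2n}(\sqrt{\hbar})$.

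The delicate point is the appeal to the "only if" part of Theorem \ref{Thm1}(ii). I would re-verify it directly from $M_{PP}^{-1}=M_{XX}-M_{XP}M_{PP}^{-1}M_{XP}^T$. Equality $M_{PP}^{-1}=M_{XX}$ makes $M_{XP}M_{PP}^{-1}M_{XP}^T=0$, and since $M_{PP}^{-1}>0$ this forces $M_{XP}=0$.
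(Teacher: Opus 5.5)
Your proof is correct. Steps 1--3 are exactly the paper's argument: you write $X\times X^{\hbar}=M_{A^{1/2}}(B_X^n(\sqrt{\hbar})\times B_P^n(\sqrt{\hbar}))$ and use linear equivariance of John ellipsoids together with Lemma~\ref{LemLoewner}. The genuine difference is in the uniqueness step.

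The paper argues by volume. Every quantum blob has volume $\operatorname{Vol}B^{2n}(\sqrt{\hbar})$, and so does $\operatorname{John}(X\times X^{\hbar})$, because $M_{A^{1/2}}$ preserves volume. Hence any quantum blob inside $X\times X^{\hbar}$ is an inscribed ellipsoid of maximal volume, and it coincides with the John ellipsoid by uniqueness of the latter.

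You instead take an arbitrary blob $\operatorname*{QB}(S)\subset X\times X^{\hbar}$ and squeeze its projections between $X$ and $X^{\hbar}$, using Theorem~\ref{Thm1}(i), antimonotonicity and biduality. This gives saturation. You then use the characterization $M_{XP}=0$ to identify the blob explicitly as $\{Ax\cdot x+A^{-1}p\cdot p\leq\hbar\}$. You are right to re-derive that characterization from $M_{PP}^{-1}=M_{XX}-M_{XP}M_{PP}^{-1}M_{XP}^{T}$ rather than quote the loosely phrased ``$S=M_L$''.

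The paper's route is shorter, but it needs the explicit John ellipsoid and the uniqueness theorem for John ellipsoids. Your uniqueness claim is independent of both, and it gives more information: any quantum blob in $X\times X^{\hbar}$ is forced to have projections exactly $X$ and $X^{\hbar}$. In fact your squeeze argument works verbatim for an arbitrary centrally symmetric convex body $X$ and shows $\Pi_X\operatorname*{QB}(S)=X$. So $X\times X^{\hbar}$ contains no quantum blob at all unless $X$ is an ellipsoid, which the volume argument does not reveal.
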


\begin{proof}
We have $X^{\hbar}=A^{1/2}(B_{P}^{n}(\sqrt{\hbar}))$ hence
\[
X\times X^{\hbar}=M_{A^{1/2}}(B_{X}^{n}(\sqrt{\hbar})\times B_{P}^{n}%
(\sqrt{\hbar}))
\]
and the claim follows from the fact that the John ellipsoid \ of $B_{X}%
^{n}(\sqrt{\hbar})\times B_{P}^{n}(\sqrt{\hbar})$ is $B^{2n}(\sqrt{\hbar})$
(see Lemma \ref{LemLoewner} above). The matrix $M_{A^{-1/2}}$ is symplectic,
so this ellipsoid is a quantum blob. However, every quantum blob has volume
$\operatorname{vol}B^{2n}(\sqrt{\hbar})$, which is precisely the volume of the
above John ellipsoid. Therefore any quantum blob contained in $X\times
X^{\hbar}$ would be a maximal-volume ellipsoid contained in $X\times X^{\hbar
}$. By uniqueness of the John ellipsoid, it must coincide with the one above.
That the orthogonal projections of $\operatorname*{QB}(S)$ on the $x$ and $p$
spaces are $X$ and $X^{\hbar}$ is obvious. The uniqueness of quantum blob
contained in $X\times X^{\hbar}$ follows from the uniqueness of the John
ellipsoid and the fact that all quantum blobs have the same volume.
\end{proof}

Notice that we have
\[
c(\operatorname{John}(X\times X^{\hbar}))=\pi\hbar
\]
for every symplectic capacity $c.$

\subsection{The general case}

Let us now consider the general case, that of an arbitrary polar quantum pair
$(X,P)$ with%
\begin{equation}
X^{\hbar}\subset\ P=\{p:Bp\cdot p\leq\hbar\}. \label{inclus}%
\end{equation}
By an argument similar to that above the John ellipsoid $\Omega
=\operatorname{John}(X\times P)$ is%
\begin{equation}
\Omega=K_{AB}\left(  (B^{2n}(\sqrt{\hbar})\right)  \text{ \ },K_{AB}=%
\begin{pmatrix}
A^{-1/2} & 0\\
0 & B^{-1/2}%
\end{pmatrix}
\label{omkab}%
\end{equation}
that \ is, in coordinates,%
\begin{equation}
\Omega=\left\{  z:Mz\cdot z\leq\hbar\right\}  \text{ \ },\text{ \ }M=%
\begin{pmatrix}
A & 0\\
0 & B
\end{pmatrix}
. \label{omam}%
\end{equation}
The orthogonal projections of $\Omega$ on the $x$ and $p$ spaces \ are
\begin{equation}
\Pi_{X}\Omega=X\text{ \ \textit{and}}\ \ \Pi_{P}\Omega=P. \label{projab}%
\end{equation}

Let us characterize the quantum blobs contained in $\operatorname{John}%
(X\times P)$. The first statement is just a consequence of Theorem
\ref{propblob1}. It says that we may rescale the John ellipsoid of $X\times
X^{\hbar}$ inside $X\times P$ to obtain new quantum blobs.

\begin{theorem}
\label{ThmJohnBlob}Let $(X,P)$ be a quantum polar pair. (i) Let $\lambda
_{\max}$ be defined by
\[
\lambda_{\max}=\sup\{\lambda\geq1:\lambda X^{\hbar}\subset P\}
\]
and set $M_{\lambda}=%
\begin{pmatrix}
\lambda^{-1}I & 0\\
0 & \lambda I
\end{pmatrix}
$. For $1\leq\lambda\leq\lambda_{\max}$ the ellipsoid%
\[
M_{\lambda}\operatorname{John}(X\times X^{\hbar})=M_{\lambda A^{-1/2}}%
(B^{2n}(\sqrt{\hbar}))
\]
is a quantum blob contained in $\operatorname{John}(X\times P)$. (ii) More
generally,%
\[
M_{A^{-1/2}B^{-1/2}}\operatorname{John}(X\times X^{\hbar})\subset
\operatorname{John}(X\times P)
\]
is a quantum blob
\end{theorem}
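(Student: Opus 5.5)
The plan is to reduce every inclusion to a Löwner comparison of block-diagonal quadratic forms. Throughout, write $X=\{x:Ax\cdot x\leq\hbar\}$ and $P=\{p:Bp\cdot p\leq\hbar\}$. By (\ref{dualellh}), $X^{\hbar}=\{p:A^{-1}p\cdot p\leq\hbar\}$, so the quantum polar condition $X^{\hbar}\subset P$ is equivalent to $B\leq A^{-1}$. By (\ref{omam}), $\operatorname{John}(X\times P)=\{Ax\cdot x+Bp\cdot p\leq\hbar\}$, and by Theorem \ref{propblob1}, $\operatorname{John}(X\times X^{\hbar})=\{Ax\cdot x+A^{-1}p\cdot p\leq\hbar\}=M_{A^{-1/2}}(B^{2n}(\sqrt{\hbar}))$.

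The basic tool is the following criterion. If $F=\operatorname{diag}(F_{1},F_{2})$ and $G=\operatorname{diag}(G_{1},G_{2})$ with $F_{1}\geq G_{1}$ and $F_{2}\geq G_{2}$, then $Fz\cdot z\geq Gz\cdot z$ for all $z$. Hence $\{Fz\cdot z\leq\hbar\}\subset\{Gz\cdot z\leq\hbar\}$. For a block-diagonal symplectic matrix $M_{L}=\operatorname{diag}(L^{-1},L^{T})$, the image of $\{Fz\cdot z\leq\hbar\}$ is $\{F'z\cdot z\leq\hbar\}$ with
\[
F'=\operatorname{diag}(L^{T}F_{1}L,\;L^{-1}F_{2}L^{-T}).
\]
This follows by substituting $x=Lx'$ and $p=L^{-T}p'$. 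Every such image of $\operatorname{John}(X\times X^{\hbar})$ is a quantum blob, since $M_{L}M_{A^{-1/2}}\in\operatorname*{Sp}(n)$. So in both parts it only remains to check the inclusion, and the Löwner criterion does this.

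For (i), take $L=\lambda I$, so that $M_{\lambda}=M_{L}$ with the convention above. The transformed form is $\operatorname{diag}(\lambda^{2}A,\lambda^{-2}A^{-1})$. The $x$-block satisfies $\lambda^{2}A\geq A$ because $\lambda\geq1$. The $p$-block condition $\lambda^{-2}A^{-1}\geq B$ says exactly that $\{A^{-1}p\cdot p\leq\lambda^{2}\hbar\}=\lambda X^{\hbar}\subset P$. This holds for all $\lambda\leq\lambda_{\max}$: the set of admissible $\lambda$ is an interval because the sets $\lambda X^{\hbar}$ increase with $\lambda$, and it is closed because $P$ is closed, so the supremum is attained. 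I would also note that the expression $M_{\lambda A^{-1/2}}(B^{2n}(\sqrt{\hbar}))$ in the statement should be read as $M_{\lambda}M_{A^{-1/2}}(B^{2n}(\sqrt{\hbar}))$, which is a quantum blob.

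For (ii), take $L=A^{-1/2}B^{-1/2}$. Then $L^{-1}=B^{1/2}A^{1/2}$ and $L^{-T}=A^{1/2}B^{1/2}$. The $p$-block becomes
\[
L^{-1}A^{-1}L^{-T}=B^{1/2}A^{1/2}A^{-1}A^{1/2}B^{1/2}=B,
\]
which matches $P$ exactly. The $x$-block becomes
\[
L^{T}AL=B^{-1/2}A^{-1/2}AA^{-1/2}B^{-1/2}=B^{-1}.
\]
So the image is the quantum blob $\{B^{-1}x\cdot x+Bp\cdot p\leq\hbar\}=M_{B^{1/2}}(B^{2n}(\sqrt{\hbar}))$. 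The required inequality $B^{-1}\geq A$ is equivalent to $B\leq A^{-1}$, which is precisely the hypothesis $X^{\hbar}\subset P$, because matrix inversion reverses the Löwner order on positive definite matrices.

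The main obstacle is bookkeeping rather than analysis. One has to fix the convention $M_{L}=\operatorname{diag}(L^{-1},L^{T})$ for the non-symmetric $L=A^{-1/2}B^{-1/2}$ and check that the cross terms cancel as computed above. One also has to justify replacing $\sup$ by $\max$ in (i).
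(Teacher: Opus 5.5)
Your argument is correct and is essentially the paper's: in both parts you compare the transformed John ellipsoid with $\operatorname{John}(X\times P)$ factor by factor, and your L\"{o}wner inequalities $\lambda^{-2}A^{-1}\geq B$ and $B^{-1}\geq A$ are the quadratic-form versions of the paper's inclusions $\lambda X^{\hbar}\subset P$ and $B^{1/2}A^{1/2}X=P^{\hbar}\subset X$ (together with $B^{-1/2}A^{-1/2}X^{\hbar}=P$). Working with the explicit block-diagonal forms is slightly more careful than the paper on two points. The paper passes from inclusions of products such as $\lambda^{-1}X\times\lambda X^{\hbar}\subset X\times P$ or $P^{\hbar}\times P\subset X\times P$ directly to inclusions of John ellipsoids, a monotonicity that fails for general convex bodies and holds here only because of the explicit formula (\ref{omam}); and it never checks, as you do, that the supremum defining $\lambda_{\max}$ is attained.
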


\begin{proof}
(i) Let $1\leq\lambda\leq\lambda_{\max}$. We have $\lambda^{-1}\leq1$ hence%
\begin{align*}
M_{\lambda}\operatorname{John}(X\times X^{\hbar})  &  =\operatorname{John}%
(\lambda^{-1}X\times\lambda X^{\hbar})\\
&  \subset\operatorname{John}(X\times P).
\end{align*}
The statement follows from $M_{\lambda}\in\operatorname*{Sp}(n)$ and formula
(\ref{qmab}) in Theorem \ref{propblob1}. (ii) It suffices to prove that
\[
M_{A^{-1/2}B^{-1/2}}(X\times X^{\hbar})\subset M_{A^{-1/2}B^{-1/2}}(X\times
P).
\]
We have
\[
M_{A^{-1/2}B^{-1/2}}=%
\begin{pmatrix}
B^{1/2}A^{1/2} & 0\\
0 & B^{-1/2}A^{-1/2}%
\end{pmatrix}
\in\operatorname*{Sp}(n)
\]
hence it suffices to show that $B^{1/2}A^{1/2}(X)\subset X$ and $B^{-1/2}%
A^{-1/2}(X^{\hbar})\subset P$. We have%
\[
B^{1/2}A^{1/2}(X)=B^{1/2}A^{1/2}(A^{-1/2}B^{n}(\sqrt{\hbar}))=B^{1/2}%
(B^{n}(\sqrt{\hbar})))
\]
hence $B^{1/2}A^{1/2}(X)=P^{\hbar}\subset X$ since $X^{\hbar}\subset P$. On
the other hand,
\[
B^{-1/2}A^{-1/2}(X^{\hbar})=B^{-1/2}A^{-1/2}(A^{1/2}B^{n}(\sqrt{\hbar
}))=B^{-1/2}(B^{n}(\sqrt{\hbar}))
\]
that is, $B^{-1/2}A^{-1/2}(X^{\hbar})=P$ so we are done.
\end{proof}

\section{From Indeterminacy to Uncertainty\label{sec5}}

We have so far discussed quantum indeterminacy from a geometric and
topological viewpoint. We now establish the connection between these
constructions and the uncertainty principle in its traditional form.

\emph{From now on we will view} $\operatorname{John}(X\times P)$ \emph{as the
canonical covariance-type ellipsoid associated with the data of }$X$\emph{ and
}$P$\emph{.}

We emphasize that, for a general quantum polar pair, the matrix $\Sigma$
defined by the John ellipsoid is a geometrically defined covariance-type
matrix and need not be the statistical covariance matrix of an underlying
quantum state. The latter interpretation requires, in particular, the
existence of second moments. The Gaussian case is distinguished: for the
Gaussian state naturally associated with a quantum blob (and, more generally,
with the corresponding Gaussian covariance ellipsoid), $\Sigma$ is precisely
the statistical covariance matrix. Thus the Gaussian setting provides an exact
bridge between the geometric construction developed here and the usual
statistical formulation of quantum uncertainty. On the other hand, the datum
of $X\times P$ does not in general give us any information on whether the
state under consideration is pure or mixed: pure quantum states can have
arbitrarily large covariance matrices. \ Here is a textbook example: Let
$h_{m}$ denote the $m$-th normalized Hermite function on $\mathbb{R}$, and
define the pure non-Gaussian state \ $\psi_{m}=$ $h_{m}\otimes\cdot\cdot
\cdot\otimes h_{m}$. The latter has covariance matrix%
\[
\Sigma_{m}=\hbar\left(  m+\frac{1}{2}\right)  I_{2n}%
\]
and corresponding covariance ellipsoid%
\[
\Omega_{m}=B^{2n}(R_{m})\text{ \ },\text{ \ }R_{m}^{2}=2\hbar\left(
m+\frac{1}{2}\right)
\]
becomes arbitrarily large as $m\rightarrow\infty$.

\begin{theorem}
Let $\Omega=\operatorname{John}(X\times P)$ be the John ellipsoid of a quantum
polar pair $(X\times P)$. Let $\Sigma$%
%TCIMACRO{\TEXTsymbol{>}}%
%BeginExpansion
$>$%
%EndExpansion
$0$ be the matrix determined by the condition%
\begin{equation}
\Omega=\{z:\tfrac{1}{2}\Sigma^{-1}z\cdot z\leq1\}.
\end{equation}
(i) Then we have
\begin{equation}
c(\Omega)\geq\pi\hbar\label{charlyne}%
\end{equation}
for every symplectic capacity $c$; (ii) the matrix $\Sigma$ satisfies the
quantum condition%
\begin{equation}
\Sigma+\frac{i\hbar}{2}J\geq0. \label{Quantum}%
\end{equation}
If $c(\Omega)=\pi\hbar$, then the quantum blob contained in $\Omega$ is unique.
\end{theorem}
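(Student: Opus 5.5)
The plan is to work throughout with the explicit description of the John ellipsoid from (\ref{omkab})--(\ref{omam}). Write $X=\{x:Ax\cdot x\leq\hbar\}$ and $P=\{p:Bp\cdot p\leq\hbar\}$, so that
\[
\Omega=\{z:Mz\cdot z\leq\hbar\},\qquad M=\begin{pmatrix}A&0\\0&B\end{pmatrix},\qquad \Sigma=\tfrac{\hbar}{2}M^{-1}.
\]
By Lemma 1(ii), the hypothesis $X^{\hbar}\subset P$ reads $B\leq A^{-1}$ in the L\"owner order.

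For (i), I will use Theorem \ref{ThmJohnBlob}(ii). It provides a symplectic matrix $S=M_{A^{-1/2}B^{-1/2}}M_{A^{-1/2}}$ with $S(B^{2n}(\sqrt{\hbar}))\subset\Omega$. Monotonicity and symplectic invariance of capacities then give $c(\Omega)\geq c(S(B^{2n}(\sqrt{\hbar})))$. Normalization gives $c(B^{2n}(\sqrt{\hbar}))=\pi\hbar$, so $c(\Omega)\geq\pi\hbar$. I will check directly that $S$ maps the ball into $\Omega$, since the proof of Theorem \ref{ThmJohnBlob}(ii) passes through products of bodies. Writing $S=\mathrm{diag}(U,(U^{T})^{-1})$ with $U=B^{1/2}$, the ball is mapped into the set where $AUx\cdot Ux+B(U^{T})^{-1}p\cdot(U^{T})^{-1}p\leq\hbar$. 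This reduces to $B^{1/2}AB^{1/2}\leq I$, which is equivalent to $B\leq A^{-1}$.

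For (ii), I will diagonalize $M$ by Williamson's theorem: $M=S^{T}DS$ with $D=\mathrm{diag}(\Lambda,\Lambda)$ and $\Lambda=\mathrm{diag}(\lambda_{1},\dots,\lambda_{n})$. Then $\Omega=S^{-1}\{Dz\cdot z\leq\hbar\}$, and since all symplectic capacities agree on ellipsoids, $c(\Omega)=\pi\hbar/\max_{j}\lambda_{j}$. Part (i) therefore gives $\lambda_{j}\leq1$ for all $j$. The symplectic eigenvalues of $\Sigma=\frac{\hbar}{2}M^{-1}$ are $\frac{\hbar}{2}\lambda_{j}^{-1}\geq\frac{\hbar}{2}$. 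The standard equivalence (via the same Williamson normal form, reducing to $2\times2$ blocks $\begin{pmatrix}\nu&i\hbar/2\\-i\hbar/2&\nu\end{pmatrix}\geq0$) turns this into $\Sigma+\frac{i\hbar}{2}J\geq0$.

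The uniqueness clause is the main obstacle. Under $c(\Omega)=\pi\hbar$ we have $\max_{j}\lambda_{j}=1$, and after applying $S$ we may take $\Omega=\{\sum_{j}\lambda_{j}(x_{j}^{2}+p_{j}^{2})\leq\hbar\}$ with $\lambda_{1}=1$. Let $T(B^{2n}(\sqrt{\hbar}))\subset\Omega$ be a contained quantum blob. My first attempt is to use the extremality of the $(x_{1},p_{1})$-disk. Its area equals $c(\Omega)$, so by non-squeezing applied to the contained blob, $T$ must contain a disk of area $\pi\hbar$ in that plane, forcing $T$ to map a symplectic 2-plane onto the $(x_{1},p_{1})$-plane isometrically. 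I would then try to propagate this coordinate by coordinate.

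I expect this propagation to succeed only in the coordinates $j$ with $\lambda_{j}=1$. When some $\lambda_{j}<1$, a small symplectic squeeze in the $(x_{j},p_{j})$-plane keeps a contained blob inside $\Omega$. So I anticipate that the argument proves uniqueness exactly when all $\lambda_{j}=1$, i.e.\ when $\Omega$ is itself a quantum blob. Otherwise it gives uniqueness only of the component of the blob along the saturated symplectic planes. I would need to examine carefully whether the stated hypothesis is meant to force that case, or whether the claim holds only in this weaker, component-wise sense.
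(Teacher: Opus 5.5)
Parts (i) and (ii) follow the paper's route. You produce a quantum blob inside $\Omega$ via Theorem~\ref{ThmJohnBlob} and use monotonicity, then pass from $\max_j\lambda_j\le 1$ to $\Sigma+\frac{i\hbar}{2}J\ge 0$ through the Williamson normal form. Your bookkeeping is correct: $\Sigma=\frac{\hbar}{2}M^{-1}$, and the symplectic eigenvalues of $\Sigma$ are $\frac{\hbar}{2}\lambda_j^{-1}$.

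Keep your direct check that $\mathrm{diag}(B^{1/2},B^{-1/2})B^{2n}(\sqrt{\hbar})\subset\Omega$, which reduces to $B^{1/2}AB^{1/2}\le I$, i.e.\ $B\le A^{-1}$. The paper's proof of Theorem~\ref{ThmJohnBlob}(ii) only shows an inclusion into $X\times P$. John ellipsoids are not monotone under inclusion, so that step genuinely needs your verification.

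You (like the paper) also silently take $X$ and $P$ to be ellipsoids, and this matters. Take $n=2$, $X=[-\sqrt{\hbar},\sqrt{\hbar}]^2$ and $P=X^{\hbar}=\{|p_1|+|p_2|\le\sqrt{\hbar}\}$. The symmetries of $X\times P$ plus a short volume maximization give $\operatorname{John}(X\times P)=\{|x|^2+2|p|^2\le\hbar\}$. Its capacity is $\pi\hbar/\sqrt{2}<\pi\hbar$, and the associated $\Sigma=\mathrm{diag}(\frac{\hbar}{2}I,\frac{\hbar}{4}I)$ even violates (\ref{RS2}). So state the ellipsoid assumption as an explicit hypothesis.

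On the uniqueness clause your suspicion is correct, and you should commit to it rather than look for a reading that saves it. As stated, the clause is false for $n\ge 2$; the paper gives no argument, only a citation. The hypothesis $c(\Omega)=\pi\hbar$ only yields $\max_j\lambda_j=1$, and John ellipsoids of ellipsoidal quantum polar pairs can have the other $\lambda_j<1$.

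For example, take $n=2$, $A=I$, $B=\mathrm{diag}(1,\frac{1}{2})$. Then $B\le A^{-1}$, and $\Omega=\{|x|^2+p_1^2+\frac{1}{2}p_2^2\le\hbar\}$ has symplectic eigenvalues $1$ and $2^{-1/2}$, so $c(\Omega)=\pi\hbar$. Yet for every $\mu\in[2^{-1/2},1]$ the quantum blob $\{x_1^2+p_1^2+\mu^{-2}x_2^2+\mu^{2}p_2^2\le\hbar\}$ lies in $\Omega$. This is exactly your squeeze.

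The correct statement is that the quantum blob in $\Omega$ is unique if and only if all $\lambda_j=1$, i.e.\ $\Omega$ is itself a quantum blob. Since the $\lambda_j^2$ are the eigenvalues of $AB$, this means $B=A^{-1}$, i.e.\ $P=X^{\hbar}$, which is the saturated case of Theorem~\ref{propblob1}. Sufficiency follows from the equal-volume argument: a blob contained in a blob of the same volume coincides with it. Necessity is your squeeze. For $n=1$ the two hypotheses coincide, which is the only dimension in which the clause holds as stated.
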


\begin{proof}
(i) In view of Theorem \ref{ThmJohnBlob} $\Omega$ contains at least one
quantum blob, hence $c(\Omega)\geq\pi\hbar$. For a proof of the uniqueness of
the quantum blob in $\Omega$ when $c(\Omega)=\pi\hbar$ \ see
\cite{Bullsci,goluPR}. (ii) Let us show that (\ref{charlyne}) implies the
quantum condition (\ref{Quantum}). Setting $M=\frac{1}{2}\Sigma^{-1}$ the
ellipsoid $\Omega$ is the set of all $z\in\mathbb{R}^{2n}$ such that $Mz\cdot
z\leq1$ and the condition $\Sigma+\frac{i\hbar}{2}J\geq0$ is equivalent to
$\frac{1}{2}M^{-1}+\frac{i\hbar}{2}J\geq0$. Using a Williamson diagonalization
\cite{Birk,HZ} of $M$ this is equivalent to $\frac{1}{2}D^{-1}+\frac{i\hbar
}{2}J\geq0$ where
\[
D=%
\begin{pmatrix}
\Lambda^{\sigma} & 0\\
0 & \Lambda^{\sigma}%
\end{pmatrix}
\text{ \ , \ }\Lambda^{\sigma}=\operatorname*{diag}(\lambda_{1}^{\sigma
},...,\lambda_{n}^{\sigma})
\]
the $\lambda_{j}^{\sigma}>0$ being the symplectic eigenvalues of $M$
(\textit{i.e}. $\pm i\lambda_{j}^{\sigma}$ is an eigenvalue of $JM$.) It
follows that the characteristic polynomial of $\frac{1}{2}M^{-1}+\frac{i\hbar
}{2}J$ is the product $P(t)=P_{1}(t)\cdot\cdot\cdot P_{n}(t)$ where%
\[
P_{\alpha}(t)=t^{2}-(\lambda_{\alpha}^{\sigma})^{-1}t+\tfrac{1}{4}%
(\lambda_{\alpha}^{\sigma})^{-2}-\tfrac{1}{4}\hbar^{2}\text{.}%
\]
The eigenvalues of the matrix $\frac{1}{2}M^{-1}+\frac{i\hbar}{2}J$ are thus
the real numbers $\frac{1}{2}[(\lambda_{j}^{\sigma})^{-1}\pm\hbar]$ hence that
matrix is non-negative if and only if $\lambda_{j}^{\sigma}\leq\frac{1}%
{2}(\frac{1}{2}\hbar)^{-1}$ for every $j$, that is, if and only if
$\lambda_{\max}^{\sigma}\leq\frac{1}{2}(\frac{1}{2}\hbar)^{-1}$; this is in
turn equivalent to
\[
c(\Omega)=\pi/\lambda_{\max}^{\sigma}\geq\pi\hbar.
\]

\end{proof}

The condition (\ref{Quantum}) has been studied by many authors in the context
of the positivity of trace class operators: see
\cite{Dutta,Sudar,Birkbis,goluPR} and Narcowich and O'Connell
\cite{Narconnell}).

Let us now derive the Robertson--Schr\"{o}dinger inequalities from the
construction above. Write the symmetric matrix $\Sigma$ in block form
\begin{equation}
\Sigma=
\begin{pmatrix}
\Sigma_{XX} & \Sigma_{XP}\\
\Sigma_{PX} & \Sigma_{PP}%
\end{pmatrix}
,\qquad\Sigma_{PX}=\Sigma_{XP}^{T}.
\end{equation}
When $\Sigma$ is interpreted as a covariance matrix, its diagonal entries are
$(\Delta x_{j})^{2}$ and $(\Delta p_{j})^{2}$, while the corresponding
diagonal entries of $\Sigma_{XP}$ are the covariances $\Delta(x_{j},p_{j})$.

\begin{corollary}
The quantum condition \eqref{Quantum} implies the Robertson--Schr\"{o}dinger
inequalities
\begin{equation}
(\Delta x_{j})^{2}(\Delta p_{j})^{2}\geq\Delta(x_{j},p_{j})^{2}+\frac
{\hbar^{2}}{4},\qquad1\leq j\leq n. \label{RS2}%
\end{equation}

\end{corollary}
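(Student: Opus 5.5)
The plan is to restrict the Hermitian positivity condition \eqref{Quantum} to the two-dimensional complex subspace spanned by the coordinate vectors $e_{j}$ (the $x_j$ direction) and $e_{n+j}$ (the $p_j$ direction) in $\mathbb{C}^{2n}$. A positive semidefinite Hermitian matrix has positive semidefinite principal submatrices, so the $2\times2$ principal submatrix of $\Sigma+\frac{i\hbar}{2}J$ with rows and columns $j$ and $n+j$ must be $\geq0$.

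The next step is to compute that submatrix. From the block form of $\Sigma$, its real part is
\[
\begin{pmatrix}
(\Delta x_{j})^{2} & \Delta(x_{j},p_{j})\\
\Delta(x_{j},p_{j}) & (\Delta p_{j})^{2}
\end{pmatrix},
\]
using the identification of the diagonal entries of $\Sigma_{XX}$, $\Sigma_{PP}$ and $\Sigma_{XP}$ stated just before the corollary. From $J=\begin{pmatrix}0&I\\-I&0\end{pmatrix}$, the entries $(j,n+j)$ and $(n+j,j)$ of $J$ are $1$ and $-1$, while the diagonal entries vanish. The submatrix is therefore
\[
\begin{pmatrix}
(\Delta x_{j})^{2} & \Delta(x_{j},p_{j})+\frac{i\hbar}{2}\\
\Delta(x_{j},p_{j})-\frac{i\hbar}{2} & (\Delta p_{j})^{2}
\end{pmatrix}.
\]

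Finally, a $2\times2$ Hermitian matrix is positive semidefinite if and only if its diagonal entries and its determinant are nonnegative. The diagonal entries are nonnegative because $\Sigma>0$. The determinant condition reads
\[
(\Delta x_{j})^{2}(\Delta p_{j})^{2}-\bigl|\Delta(x_{j},p_{j})+\tfrac{i\hbar}{2}\bigr|^{2}\geq0,
\]
and since $\bigl|\Delta(x_{j},p_{j})+\tfrac{i\hbar}{2}\bigr|^{2}=\Delta(x_{j},p_{j})^{2}+\hbar^{2}/4$, this is exactly \eqref{RS2}.

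There is no real obstacle. The only points needing care are the sign convention of $J$, which is irrelevant because only $|\cdot|^2$ enters, and the fact that Hermitian positivity passes to principal submatrices. The latter I justify by evaluating $\langle(\Sigma+\frac{i\hbar}{2}J)u,u\rangle$ on vectors $u$ supported on the two chosen coordinates.
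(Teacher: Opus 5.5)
Your proof is correct and follows the paper's argument exactly: the paper also takes the $2\times2$ principal Hermitian submatrix of $\Sigma+\frac{i\hbar}{2}J$ on the coordinates $(x_j,p_j)$ and reads off \eqref{RS2} from the nonnegativity of its determinant. Your justification that positivity passes to principal submatrices, by testing the Hermitian form on vectors supported on $e_j,e_{n+j}$, fills in a step the paper takes for granted.
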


\begin{proof}
For each $j$, take the principal $2\times2$ Hermitian submatrix of
$\Sigma+\frac{i\hbar}{2}J$ corresponding to the conjugate variables
$(x_{j},p_{j})$:
\[%
\begin{pmatrix}
(\Delta x_{j})^{2} & \Delta(x_{j},p_{j})+\frac{i\hbar}{2}\\
\Delta(x_{j},p_{j})-\frac{i\hbar}{2} & (\Delta p_{j})^{2}%
\end{pmatrix}
.
\]
It is non-negative because \eqref{Quantum} holds. Its determinant is therefore
non-negative, which gives
\[
(\Delta x_{j})^{2}(\Delta p_{j})^{2}-\Delta(x_{j},p_{j})^{2}-\frac{\hbar^{2}%
}{4}\geq0.
\]
This is precisely (\ref{RS2}). Notice that the full matrix condition
(\ref{Quantum}) is stronger than the collection of these $n$ scalar
inequalities because it also controls correlations between different degrees
of freedom.
\end{proof}

\section{Concluding Remarks\label{secConclusion}}

The purpose of this paper has been to separate a geometric notion of quantum
indeterminacy from its usual statistical formulation. The basic input is the
polar duality inclusion $X^{\hbar}\subset P$ for centrally symmetric position
and momentum bodies. This condition is formulated directly in terms of
phase-space geometry and the quantum of action $\hbar$, without initially
invoking means, variances, or a probability distribution.

The link with the standard uncertainty principle is recovered rather than
postulated. Quantum blobs provide the symplectic motivation for the polar
condition; John ellipsoids then associate with a quantum polar pair a
canonical covariance-type ellipsoid. The existence of a quantum blob inside
this ellipsoid yields the capacity bound $c(\Omega)\geq\pi\hbar$, which is
equivalent to $\Sigma+\frac{i\hbar}{2}J\geq0$. When $\Sigma$ is interpreted as
a covariance matrix, the Robertson--Schr\"{o}dinger inequalities follow
immediately. In the Gaussian case this identification is exact: $\Sigma$ is
the covariance matrix of the corresponding Gaussian state, so that the matrix
condition is precisely the standard quantum covariance condition. Thus the
Gaussian case provides the direct bridge between the present geometric
indeterminacy condition and the conventional statistical uncertainty
relations. For general quantum polar pairs, by contrast, $\Sigma$ should be
regarded as a covariance-type geometric matrix; in this sense the present
condition is a geometric extension of the uncertainty structure rather than a
claim of logical equivalence with the statistical Robertson--Schr\"{o}dinger
inequalities for arbitrary states. The relation between $\Sigma$ and the
empirical covariance matrix will be clarified in forthcoming work; it requires
the use of statistical tools which lie outside the scope of the present work
If only for reasons of length).

Several limitations should be kept explicit. Central symmetry and centering
are part of the present framework, although translations and momentum boosts
can be removed by recentering the data. Moreover, the canonical John ellipsoid
is a covariance-type geometric object and need not coincide with the
statistical covariance ellipsoid of an arbitrary quantum state. The Gaussian
case is distinguished because the two descriptions coincide: the matrix
$\Sigma$ defining the ellipsoid is then the covariance matrix of the
corresponding Gaussian state. Extending the construction to
non-centrally-symmetric data and clarifying its operational interpretation for
general experimental clouds are natural directions for further work.

\section*{APPENDIX\ A: Gromov's Theorem and Symplectic Capacities}

For details and proofs we refer to the treatise \cite{HZ,Polter} and to our
\textit{Phys. Reps.} paper \cite{goluPR}. We denote by $\operatorname*{Symp}%
(n)$ the group of all symplectomorphisms of $(\mathbb{R}_{z}^{2n},\sigma)$:
$f\in\operatorname*{Symp}(n)$ if and only if $f$ is a diffeomorphism of
$\mathbb{R}_{z}^{2n}$ and $f^{\ast}\sigma=\sigma$; equivalently $f$ is
bijective, infinitely differentiable and with infinitely differentiable
inverse, and the Jacobian matrix $Df(z)\in\operatorname*{Sp}(n)$ for every $z$.

Let us denote by $Z_{j}^{2n}(R)$ the phase space cylinder in defined by
\[
Z_{j}^{2n}(R)=\left\{  z=(x,p)\in\mathbb{R}^{2n}:x_{j}^{2}+p_{j}^{2}\leq
R^{2}\right\}  ;
\]
thus, the plane of conjugate coordinates $x_{j},p_{j}$ cuts $Z_{j}^{2n}(R)$
orthogonally, along the disk $x_{j}^{2}+p_{j}^{2}\leq R^{2}$ . Also consider
the phase space
\[
B^{2n}(R)=\left\{  (x,p)\in\mathbb{R}^{2n}:x^{2}+p^{2}\leq R^{2}\right\}  ;
\]
Mischa Gromov \cite{gr85} proved, using methods from complex analysis and
symplectic topology, the following deep result:

\begin{theorem}
[Gromov]If there exists $f\in$ $\operatorname*{Symp}(n)$ such that
$f(B^{2n}(r))\subset Z_{j}^{2n}(R)$ for some $j=1,...,n$, then we must have
$r\leq R$.
\end{theorem}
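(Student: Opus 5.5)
The statement is Gromov's non-squeezing theorem, which the appendix quotes as a known deep result. I will not attempt to reprove Gromov's pseudoholomorphic-curve argument from scratch. Instead I plan to reduce the claim to the existence of a symplectic capacity with the right normalization, which is the standard route.

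\textbf{Step 1: reduce to a capacity.} Assume there exists a symplectic capacity $c$ on subsets of $(\mathbb{R}^{2n},\sigma)$ with the following three properties:
\begin{itemize}
\item monotonicity under symplectic embeddings, $f(\Omega)\subset\Omega'$ with $f\in\operatorname*{Symp}(n)$ implies $c(\Omega)\leq c(\Omega')$;
\item conformality;
\item the normalization $c(B^{2n}(r))=\pi r^{2}=c(Z_{j}^{2n}(r))$.
\end{itemize}
Then the theorem is immediate. If $f(B^{2n}(r))\subset Z_{j}^{2n}(R)$, monotonicity gives $\pi r^{2}\leq\pi R^{2}$, hence $r\leq R$. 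So the whole content lies in constructing one such capacity.

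\textbf{Step 2: use the Hofer--Zehnder capacity.} The natural candidate, since the paper already uses it in Theorem \ref{Thm5}, is $c_{\text{HZ}}$. I would define it as the supremum of the oscillation $\max H-\min H$ over admissible Hamiltonians. These are smooth functions, compactly supported in the interior, constant near the boundary and attaining their minimum on an open set, whose Hamiltonian flow has no nonconstant periodic orbit of period $\leq1$.

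Monotonicity and conformality follow directly from the definition, because symplectomorphisms conjugate Hamiltonian flows. The lower bound $c_{\text{HZ}}(B^{2n}(r))\geq\pi r^{2}$ comes from radial Hamiltonians $H=\phi(|z|^{2})$ with $\phi'<1/\pi r^{2}$ up to normalization. For these the flow is a rotation with period longer than $1$, so there are no fast orbits, and the oscillation can be taken arbitrarily close to $\pi r^{2}$.

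\textbf{Step 3: the main obstacle.} The hard step is the upper bound $c_{\text{HZ}}(Z_{j}^{2n}(R))\leq\pi R^{2}$. Equivalently, every Hamiltonian on the cylinder with oscillation exceeding $\pi R^{2}$ has a fast periodic orbit.

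I would follow Hofer--Zehnder's variational approach. The plan is to modify $H$ outside a large compact set so that it grows like a quadratic form $q(z)$ whose $(x_{j},p_{j})$ part has slope just above $1/(\pi R^{2})$ times $\pi$; this uses the bounded cylinder direction. Then consider the action functional
\[
\mathcal{A}(z)=\tfrac12\int_{0}^{1}\langle -J\dot z,z\rangle dt-\int_{0}^{1}H(z(t))dt
\]
on the loop space $H^{1/2}(S^{1},\mathbb{R}^{2n})$. A minimax over a linked pair of sets produces a critical point with positive action. Such a critical point is nonconstant, because constant loops have action $\leq0$ where $H\geq0$. It is also not one of the spurious orbits at infinity, by the choice of slope. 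It is therefore the required fast periodic orbit.

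The delicate parts are the following:
\begin{itemize}
\item the Palais--Smale condition on the modified Hamiltonian;
\item the linking estimate, which uses the spectral splitting $E^{+}\oplus E^{0}\oplus E^{-}$ of the loop space;
\item showing that the critical value lies in the window that excludes constant loops and orbits at infinity.
\end{itemize}
This is where I expect essentially all the work to be. Given this bound, Step 1 finishes the proof. Alternatively, one could cite \cite{HZ} for this estimate, consistent with the appendix's referral to that treatise.
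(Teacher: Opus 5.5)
The paper does not prove this statement. It quotes it from Gromov \cite{gr85}, whose argument uses pseudoholomorphic curves, and then runs the logic the other way, deriving the normalization (iv) of $c_{\min}$ and $c_{\max}$ from non-squeezing. You instead construct $c_{\text{HZ}}$ variationally and read non-squeezing off its normalization. This is the Hofer--Zehnder proof. It is sound and non-circular precisely because you establish $c_{\text{HZ}}(Z_{j}^{2n}(R))\le\pi R^{2}$ directly; you could not shortcut Step~1 by invoking property (iv) of the appendix, since the paper derives it from the theorem being proved.

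The two routes buy different things. Gromov's argument finds a $J$-holomorphic sphere in the class of $S^{2}\times\{\mathrm{pt}\}$ in $S^{2}(\pi R^{2}+\varepsilon)\times T^{2n-2}$ through the image of the center, with $J$ standard on the embedded ball, and combines it with monotonicity of area for minimal surfaces. It is not tied to $\mathbb{R}^{2n}$ and is the seed of Gromov--Witten theory. Yours needs only Fourier analysis on $H^{1/2}(S^{1},\mathbb{R}^{2n})$ and a minimax. It also produces exactly the dynamical capacity the paper relies on in Theorem~\ref{Thm5} and formula~(\ref{chz}), together with existence results for periodic orbits.

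Three points in your sketch need repair.

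\textbf{Radial Hamiltonians.} For $H=\phi(|z|^{2})$ the flow on $\{|z|^{2}=s\}$ is a rotation of period $\pi/\phi'(s)$. The requirement is therefore $0\le\phi'<\pi$, with $\phi$ constant near $0$ and near $r^{2}$. Your bound $\phi'<1/(\pi r^{2})$ would cap the oscillation at $1/\pi$, not near $\pi r^{2}$.

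\textbf{Admissible class.} It must be Hofer--Zehnder's: $0\le H\le m(H)$, $H=0$ on an open set, and $H\equiv m(H)$ outside a compact set. In your compactly supported normalization, impose $H\le 0$ and shift by $-\min H$. Both the extension in Step~3 and the claim that constant loops have action $-H(z_{0})\le 0$ need the maximum to sit near the boundary.

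\textbf{Use of the hypothesis.} Your sketch never uses $m(H)>\pi R^{2}$, and it credits the wrong mechanism for discarding the orbits created by the extension. Set $q(z)=R^{-2}|z_{j}|^{2}+N^{-2}\sum_{k\neq j}|z_{k}|^{2}$. Choose $N$ so large that $E=\{q<1\}$ contains the compact set outside which $H\equiv m(H)$, and that $2\pi R^{2}<\pi N^{2}$. Put $\bar H=f(q)$ outside $E$, with the following properties:
\begin{itemize}
\item $f(s)=m(H)$ for $s\le 1$;
\item $0\le f'<2\pi R^{2}$;
\item $f(s)=\alpha s$ for large $s$, where $\pi R^{2}<\alpha<2\pi R^{2}$.
\end{itemize}
Then the nonconstant $1$-periodic orbits outside $E$ occur only where $f'(q)=\pi R^{2}$, and their action is $f'(q)q-f(q)=\pi R^{2}q-f(q)$. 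They are excluded because one can choose $f(s)>\pi R^{2}s$ for all $s\ge 1$. This is possible exactly because $f(1)=m(H)>\pi R^{2}$, and it is the only place the hypothesis enters. The slope $\alpha$, i.e.\ a coefficient of $|z_{j}|^{2}$ in $(\pi,2\pi)$ independent of $R$ (not ``$1/(\pi R^{2})$ times $\pi$''), does a different job. It gives nonresonance at infinity, hence the Palais--Smale condition, and the linking estimate.
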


A consequence of Gromov's theorem (often viewed as an alternative formulation
thereof) is the following: for every phase space symplectomorphism
$f\in\operatorname*{Symp}(n)$ the area of the orthogonal projection of
$f(B^{2n}(R))$ on any of the $x_{j},p_{j}$ planes is $\geq\pi R^{2}$.

Gromov's non-squeezing theorem guarantees the existence of symplectic
capacities. Consider now the two functions associating to any subset of
$\mathbb{R}^{2n}$ the non-negative numbers, or%
%TCIMACRO{\U{b4}}%
%BeginExpansion
\'{}%
%EndExpansion
, $+\infty$:
\begin{align}
c_{\min}(\Omega)  &  =\sup_{f\in\operatorname*{Symp}(n)}\left\{  \pi
R^{2}:f(B^{2n}(R))\subset\Omega\right\} \label{cmin}\\
c_{\max}(\Omega)  &  =\inf_{f\in\operatorname*{Symp}(n)}\left\{  \pi
R^{2}:f(\Omega)\subset Z_{j}^{2n}(R)\right\}  . \label{cmax}%
\end{align}

The functions $c_{\min}$ and $c_{\max}$ have \ \ the following properties:

\begin{proposition}
Let $c=c_{\min}$ or $c_{\max}$. (i) $c$ is monotone:
\[
\Omega\subset\Omega^{\prime}\Longrightarrow c(\Omega)\leq c(\Omega^{\prime});
\]
(ii) $c$ is invariant by symplectomorphisms:
\[
c(f\Omega)=c(\Omega)\text{ for every }f\in\operatorname*{Symp}(n);
\]
(iii) $c$ is quadratically homogeneous: $c(\lambda\Omega)=\lambda^{2}%
c(\Omega)$ for every $\lambda\in\mathbb{R}$; (iv) Non-triviality:
\[
c(B^{2n}(R))=\pi R^{2}=c(Z_{j}^{2n}(R)).
\]

\end{proposition}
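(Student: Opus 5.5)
The plan is to check the four properties (i)--(iv) directly from the definitions (\ref{cmin}) and (\ref{cmax}). The only non-formal input will be Gromov's theorem, which is needed for the non-triviality property (iv).

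\emph{Monotonicity.} For $c_{\min}$: if $\Omega\subset\Omega'$, any $f$ with $f(B^{2n}(R))\subset\Omega$ also satisfies $f(B^{2n}(R))\subset\Omega'$. So the set over which the supremum is taken can only grow, and $c_{\min}(\Omega)\leq c_{\min}(\Omega')$. For $c_{\max}$: any $f$ with $f(\Omega')\subset Z_j^{2n}(R)$ also gives $f(\Omega)\subset Z_j^{2n}(R)$. The admissible set for $\Omega$ therefore contains the one for $\Omega'$, and the infimum for $\Omega$ is no larger.

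\emph{Symplectic invariance.} $\operatorname*{Symp}(n)$ is a group. Hence $f(B^{2n}(R))\subset g\Omega$ if and only if $(g^{-1}f)(B^{2n}(R))\subset\Omega$, and $f(g\Omega)\subset Z_j^{2n}(R)$ if and only if $(fg)(\Omega)\subset Z_j^{2n}(R)$. The substitutions $f\mapsto g^{-1}f$ and $f\mapsto fg$ are bijections of the group, so both extremal problems are unchanged.

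\emph{Homogeneity.} For $\lambda\neq0$ I would conjugate by the dilation $\delta_\lambda(z)=\lambda z$. This map is not symplectic but conformally symplectic: $\delta_\lambda^{\ast}\sigma=\lambda^{2}\sigma$. Consequently $f\mapsto\delta_\lambda f\delta_\lambda^{-1}$ is a bijection of $\operatorname*{Symp}(n)$, because the factors $\lambda^{2}$ and $\lambda^{-2}$ cancel. Also $\delta_\lambda B^{2n}(R)=B^{2n}(|\lambda|R)$ and $\delta_\lambda Z_j^{2n}(R)=Z_j^{2n}(|\lambda|R)$. Thus $f(B^{2n}(R))\subset\Omega$ if and only if $(\delta_\lambda f\delta_\lambda^{-1})(B^{2n}(|\lambda|R))\subset\lambda\Omega$, and similarly for cylinders. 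Both $c_{\min}$ and $c_{\max}$ are therefore multiplied by $\lambda^{2}$. The statement says "every $\lambda\in\mathbb{R}$", so the case $\lambda=0$ must be handled separately: one checks that the point $\{0\}$ has capacity $0$, or reads the statement with $\lambda\neq0$.

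\emph{Non-triviality.} This is the only step with real content. For $c_{\min}(B^{2n}(R))$, taking $f=\mathrm{id}$ gives $c_{\min}\geq\pi R^{2}$. Conversely, if $f(B^{2n}(r))\subset B^{2n}(R)\subset Z_j^{2n}(R)$, Gromov's theorem gives $r\leq R$, so $c_{\min}(B^{2n}(R))=\pi R^{2}$. The same argument gives $c_{\min}(Z_j^{2n}(R))\leq\pi R^{2}$, while $B^{2n}(R)\subset Z_j^{2n}(R)$ gives the reverse inequality. For $c_{\max}(Z_j^{2n}(R))$, taking $f=\mathrm{id}$ gives $c_{\max}\leq\pi R^{2}$. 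Suppose $f(Z_j^{2n}(R))\subset Z_j^{2n}(r)$. Then $f(B^{2n}(R))\subset Z_j^{2n}(r)$, and Gromov's theorem forces $r\geq R$, so $c_{\max}(Z_j^{2n}(R))=\pi R^{2}$. Finally, $c_{\max}(B^{2n}(R))\leq c_{\max}(Z_j^{2n}(R))=\pi R^{2}$ by monotonicity. In the other direction, any $f$ with $f(B^{2n}(R))\subset Z_j^{2n}(r)$ has $r\geq R$ by Gromov, so $c_{\max}(B^{2n}(R))\geq\pi R^{2}$.

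The main obstacle I expect is bookkeeping rather than depth. One point is the index $j$ in the definition of $c_{\max}$, which should be fixed or infimized over. This is harmless: a symplectic permutation of the pairs $(x_k,p_k)$ moves $Z_j^{2n}$ to $Z_k^{2n}$, so the value does not depend on $j$. The other point is making sure the conformal conjugation really preserves $\operatorname*{Symp}(n)$. Gromov's theorem itself is assumed.
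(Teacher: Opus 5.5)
Your proposal is correct and follows the same route as the paper. Properties (i) and (ii) come directly from the definitions and the group structure of $\operatorname*{Symp}(n)$, (iii) comes from conjugating with the dilation $z\mapsto\lambda z$, and (iv) comes from Gromov's non-squeezing theorem. Your version is more careful than the paper's sketch in four places: you check that the conjugated maps are symplectic, you spell out both inequalities in (iv) for each capacity, and you flag both the degenerate case $\lambda=0$ and the role of the index $j$ in the definition of $c_{\max}$.
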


\begin{proof}
Properties (i) and (ii) are obvious. Let $c=c_{\min}$; we have
\begin{align*}
c_{\min}(\lambda\Omega)  &  =\sup_{f\in\operatorname*{Symp}(n)}\left\{  \pi
R^{2}:f(B^{2n}(R))\subset\lambda\Omega\right\} \\
&  =\sup_{f\in\operatorname*{Symp}(n)}\left\{  \pi R^{2}:(\lambda^{-1}%
f\lambda)\lambda^{-1}(B^{2n}(R))\subset\Omega\right\} \\
&  =\sup_{f\in\operatorname*{Symp}(n)}\left\{  \pi R^{2}:\lambda^{-1}%
f\lambda\lambda(B^{2n}(\lambda^{-2}R))\subset\Omega\right\}
\end{align*}
hence $c_{\min}(\lambda\Omega)=\lambda^{2}c_{\min}(\Omega)$ since
$\lambda^{-1}f\lambda\lambda f\in$ $\operatorname*{Symp}(n)$ if $f\in$
$\operatorname*{Symp}(n)$. The case $c=c_{\max}$ is treated similarly.
Property (iv) follows from Gromov's theorem.
\end{proof}

The case of ellipsoids is of particular interest, as all symplectic capacities
agree on them Let
\[
\Omega=\{z\in\mathbb{R}^{2n}:Mz\cdot z\leq\hbar
\]
where $M\in\operatorname*{Sym}(n,\mathbb{R})$, $M>0$. For every symplectic
capacity $c$ we have
\begin{equation}
c(\Omega)=\frac{\pi\hbar}{\lambda_{\max}^{\sigma}} \label{capellipsoid}%
\end{equation}
where $\lambda_{\max}^{\sigma}$ is the largest symplectic eigenvalue of the
matrix $M$ (the symplectic eigenvalues \cite{Birk} of $M$ are the numbers
$\lambda_{j}^{\sigma}>0$ ($1\leq j\leq n$) such that the $\pm i\lambda
_{j}^{\sigma}$ are the eigenvalues of $JM$).

Let $K\subset{\mathbb{R}}^{2n}$ be a compact convex body with smooth boundary
$\Sigma=\partial K$. That boundary carries at least one closed characteristic,
i.e. a periodic orbit of the Hamiltonian flow (see e.g. \cite{AM}.: since $K$
is convex and contains the origin, define the Minkowski functional
\[
j_{K}(z)=\operatorname{inf}\{r>0:z\in rK\}
\]
and set $H(z)=j_{K}(z)^{2}$. Then $H^{-1}(1)=\Sigma$ and the Hamiltonian
vector field is $X_{H}=J\nabla H.$

In \cite{HZ} Hofer--Zehnder construct a symplectic capacity $c_{^{\text{HZ}}}$
which measures sets in a dynamical way. It is defined as follows. Let $\Omega$
be an open set in $\mathbb{R}^{2n}$ and consider the class $\mathcal{H}%
(\Omega)$ of all Hamiltonian functions $H\geq0$ having the following three properties:

\begin{itemize}
\item $H$ vanishes outside $\Omega$ (and is hence bounded);

\item The critical values of $H$ are $0$ and $\max H$;

\item The flow $(f_{t}^{H})$ has no non-constant periodic orbit with period
$T\leq1$.
\end{itemize}

Then, by definition,%
\begin{equation}
c_{^{\text{HZ}}}(\Omega)=\sup\{\max H:H\in\mathcal{H}(\Omega)\}.
\label{defchz}%
\end{equation}

The Hofer--Zehnder capacity has the remarkable property that whenever $\Omega$
is a compact convex set in phase space then%
\begin{equation}
c_{^{\text{HZ}}}(\Omega)=\oint\nolimits_{\gamma_{\min}}pdx \label{chz}%
\end{equation}
where $pdx=p_{1}dx_{1}+\cdot\cdot\cdot+p_{n}dx_{n}$ and $\gamma_{\min}$ is the
shortest (positively oriented) Hamiltonian periodic orbit carried by the
boundary $\partial\Omega$ of $\Omega$. (In formula (\ref{chz}) the condition
that $\Omega$ be compact and convex is essential, see Hofer and Zehnder's very
illustrative \textquotedblleft Bordeaux bottle\textquotedblright\ example in
\cite{HZ}, p. 99).

\begin{acknowledgement}
The author has been financed by a "QUANTUM\ AUSTRIA" grant of the Austrian
Research Foundation FWF (Grant number PAT 2056623).
\end{acknowledgement}

\textbf{Data Availability. }No data has been used or created.

\textbf{Conflicts of interest.} There are no conflicts of interest.

\end{document}